\newtheorem{theorem}{Theorem}[section]
\newtheorem{lemma}[theorem]{Lemma}
\theoremstyle{definition}
\theoremstyle{remark}
\newtheorem{remark}[theorem]{Remark}
\numberwithin{equation}{section}
\newcommand{\bol}[1]{\mbox{\boldmath$#1$}}
\newcommand{\eqdist}{\stackrel{d}{=}}
\newcommand{\bSigma}{\bol{\Sigma}}
\newcommand{\bmu}{\bol{\mu}}
\newcommand{\btheta}{\bol{\theta}}
\newcommand{\bx}{\mathbf{x}}
\newcommand{\bM}{\mathbf{M}}
\newcommand{\bX}{\mathbf{X}}
\newcommand{\bw}{\mathbf{w}}
\newcommand{\bOne}{\mathbf{1}}
\newcommand{\E}{\mbox{E}}
\newcommand{\Var}{\mbox{Var}}
\newcommand{\bOmega}{\boldsymbol{\Omega}}
\newcommand{\bS}{\mathbf{S}}
\newcommand{\optn}[1]{\operatorname{#1}}
\newcommand{\VaR}{\operatorname{VaR}}
\newcommand{\CVaR}{\operatorname{CVaR}}
\newcommand{\argmin}{\mathop{\mathrm{argmin}}}
\providecommand{\keywords}[1]
{
\small	
\textbf{\textit{Keywords}:} #1
}
\begin{document}

\title{Bayesian Quantile-Based Portfolio Selection}

\author[1]{Taras Bodnar}
\author[1]{Mathias Lindholm}
\author[1]{Vilhelm Niklasson}
\author[1]{Erik Thorsén}
\affil[1]{Department of Mathematics, Stockholm University, SE-10691 Stockholm, Sweden}

\maketitle

\begin{abstract}
We study the optimal portfolio allocation problem from a Bayesian perspective using value at risk (VaR) and conditional value at risk (CVaR) as risk measures. By applying the posterior predictive distribution for the future portfolio return, we derive relevant quantiles needed in the computations of VaR and CVaR, and express the optimal portfolio weights in terms of observed data only. This is in contrast to the conventional method where the optimal solution is based on unobserved quantities which are estimated, leading to suboptimality. We also obtain the expressions for the weights of the global minimum VaR and CVaR portfolios, and specify conditions for their existence. It is shown that these portfolios may not exist if the confidence level used for the VaR or CVaR computation are too low. Moreover, analytical expressions for the mean-VaR and mean-CVaR efficient frontiers are presented and the extension of theoretical results to general coherent risk measures is provided. One of the main advantages of the suggested Bayesian approach is that the theoretical results are derived in the finite-sample case and thus they are exact and can be applied to large-dimensional portfolios. 

By using simulation and real market data, we compare the new Bayesian approach to the conventional method by studying the performance and existence of the global minimum VaR portfolio and by analysing the estimated efficient frontiers. It is concluded that the Bayesian approach outperforms the conventional one, in particular at predicting the out-of-sample VaR.
\end{abstract}

\keywords{Finance; Bayesian inference; Posterior predictive distribution; Optimal portfolio; Quantile-based risk measure}

\section{Introduction}\label{sec:introduction}
The economic theory underlying optimal portfolio selection was pioneered by \cite{markowitz1952}. In his seminal paper, the optimal allocation of the available assets was determined by their expected returns and the covariance matrix of the asset returns. In practise, however, both the vector of expected returns and the covariance matrix are unknown and have to be estimated by using historical data. These estimates were traditionally treated as the true parameters of the data-generating process and plugged into the equations for the weights of optimal portfolios. Unfortunately, this causes a challenging problem since the optimal portfolio weights appear to be sensitive to misspecification of the input parameters, especially the expected returns, and thus estimation errors can lead to poorly performing portfolios \citep[see,][]{chopra1993effect, frankfurter1971portfolio, klein1976effect, merton1980estimating, simaan2014opportunity,  bodnar2018estimation}.

The problem concerning parameter uncertainty was one of the reasons why the Bayesian approach was introduced in portfolio theory during the 1970s \citep{winkler1975bayesian}. In the Bayesian setting, the parameters of the distribution of the asset returns are modelled as random variables with a prior distribution summarizing the information about the asset returns which is not present in the historical data. The posterior distribution, derived from the likelihood function and the used priors, provides updated knowledge on the model parameters conditionally on the observed data. In addition to reducing estimation risk, a Bayesian setting also makes it possible to employ useful prior information and implement fast numerical algorithms \citep[see,][]{avramov2010bayesian, rachev2008bayesian}. Several different ways of specifying a prior have been suggested in the literature \citep[see, e.g.,][]{avramov2010bayesian, BodnarMazurOkhrin2017, tu2010incorporating}. In general, a prior can be either informative or non-informative. Early applications of Bayesian statistics in finance were mainly based on uninformative or model-based priors \citep{bawa1979estimation}. However, during the 80s and 90s, some more sophisticated priors were developed. Two such highly influential priors are the hyperparameter prior by \cite{jorion1986bayes}, which was inspired by the Bayes-Stein shrinkage approach, and the informative prior by \cite{black1992global}, which relies on market equilibrium arguments. 

In addition to specifying a prior, an assumption must be made about the return distribution. Such a modelling choice is crucial both in the frequentist and in the Bayesian setup. The standard assumption of unconditional normality has been heavily criticized since actual market returns tend to have a higher density both close to the mean and far away from it \citep{cont2001empirical}. Because of this, more general assumptions allowing for heavier tails have been suggested \citep[see, e.g.,][]{adcock2014mean, bauder2018bayesian}. For instance, \cite{bauder2018bayesian} studied a Bayesian setting where they used the general assumption that the asset returns are exchangeable and multivariate centered spherically symmetric \citep[see also,][]{brugiere2020quantitative}. They derived a stochastic representation of the posterior predictive distribution, i.e., the distribution of the next return conditioned on the previous returns, under this assumption. By using this representation in the portfolio optimization problem, the common suboptimality issue in the standard plug-in approach was avoided since the estimates were no longer treated as the true parameters. Using a stochastic representation is a well-established tool in computational statistics \citep{givens2013computational} and it reduces the need for demanding Markov Chain Monte Carlo simulations. Moreover, a stochastic representation makes it easier to find quantiles which can be used to measure the risk. 

In the standard Markowitz setup, the risk is measured in terms of the variance of the portfolio return \citep{markowitz1952}. Recent regulations have forced banks and insurance companies to use other measures of risk, such as value at risk (VaR) and conditional value at risk (CVaR). The former is still in use in the Solvency II directive while the latter is enforced by the Basel III and IV standards. Both VaR and CVaR are quantile-based risk measures focusing on downside risk, meaning that the risk is determined from a quantile in the right tail of a loss distribution.  A lot of attention has been given to these two risk measures over the last decades concerning their applicability, computations and back-testing \citep[see, e.g.,][]{baumol1963expected, jorion1997value, pritsker1997evaluating, meng2020estimating, staino2020nested}. The two measures differ substantially in the way the quantiles are selected and it is usually argued that CVaR is a better measure of risk in comparison to VaR since it does not violate the desirable property of subadditivity \citep{artzner1999coherent}.

The usage of VaR and CVaR in portfolio optimization has also become increasingly popular over the last decades. For instance, \cite{rockafellar2000optimization} and \cite{babat2018computing} studied the minimization of CVaR in a portfolio using linear programming. \cite{ alexander2002economic, alexander2004comparison} derived the weights of a portfolio when using VaR or CVaR in the objective function in the portfolio optimization problem under the assumption of multivariate normally distributed asset returns. They noted that the optimal allocation strategy does not depend on whether the variance, VaR or CVaR is used as a risk measure. However, the portfolio that globally minimizes VaR or CVaR may not coincide with the one which globally minimizes the variance. This is illustrated in Figure \ref{fig:simulation_mean_variance_ef} where the mean-variance efficient frontier is plotted together with the locations of the global minimum variance (GMV), global minimum VaR (GMVaR) and global minimum CVaR (GMCVaR) portfolios based on weekly returns for 50 randomly selected stocks in the S\&P 500 index. In this figure, the efficient frontier and the optimal portfolios are all calculated using the conventional method, i.e., using sample estimates and treating them as the true parameters of the data-generating model. The fact that the GMVaR and GMCVaR portfolios are located above the GMV portfolio on the mean-variance efficient frontier motivates the study of these portfolios also for an investor who constructs the portfolio following the mean-variance analysis, since their expected returns determine the lower bounds for an investor who wants to be efficient also from a mean-VaR or mean-CVaR perspective. Given the new regulations, this should be highly desirable.

\begin{figure}[H]
\begin{mdframed}[backgroundcolor=black!10,rightline=false,leftline=false]
  \begin{subfigure}[b]{0.5\linewidth}
    \centering
    \includegraphics[width=0.95\linewidth]{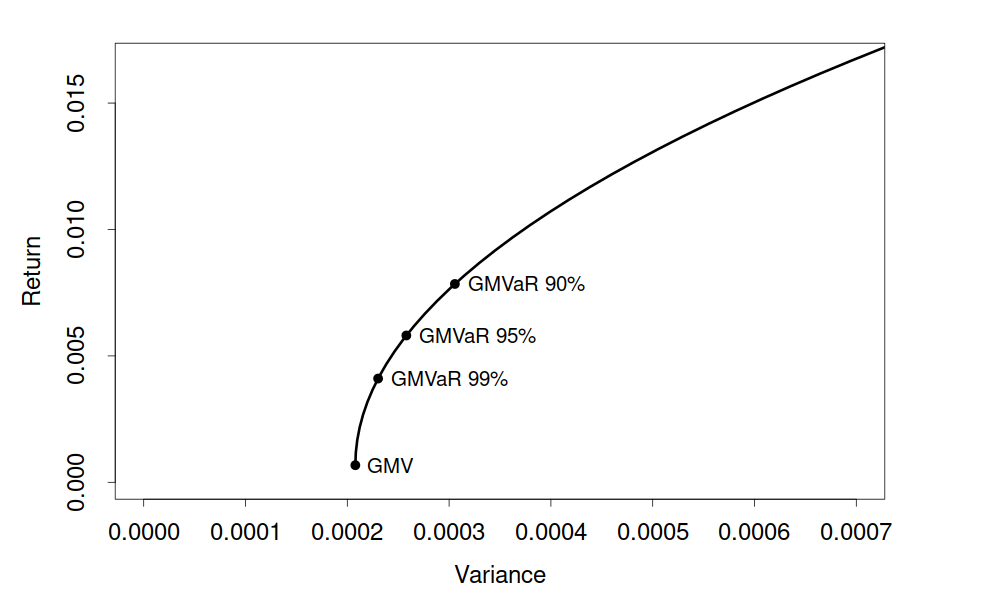}
    \caption{GMVaR}
  \end{subfigure}
  \begin{subfigure}[b]{0.5\linewidth}
    \centering
    \includegraphics[width=0.95\linewidth]{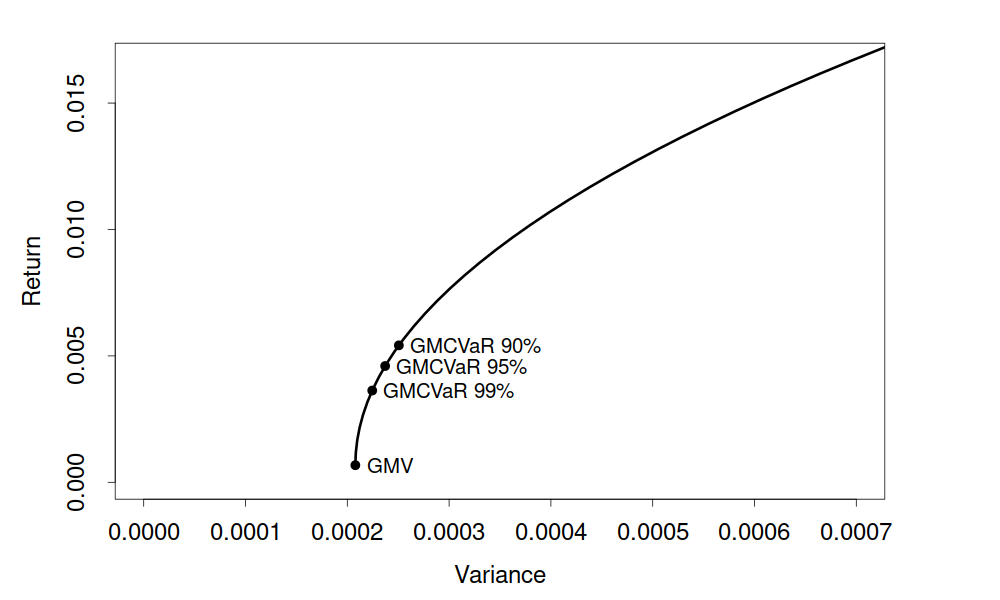}
    \caption{GMCVaR}
  \end{subfigure}
  \caption{Mean-variance efficient frontier based on empirical data from stocks in the S\&P 500 index together with the locations of the GMV, GMVaR and GMCVaR portfolios.}
  \label{fig:simulation_mean_variance_ef}
 \end{mdframed}
\end{figure}

This paper contributes to the current literature on portfolio theory by combining a Bayesian framework and a quantile-based asset allocation strategy. This gives new insights into how an optimal portfolio can be constructed in practice. We further compare the new Bayesian approach to its frequentist counterpart and demonstrate some of the advantages of the former through a simulation study and an empirical illustration. Throughout the report, we consider two different well-known priors (one informative and one non-informative) and we use general (non-normal) assumptions on the asset returns as  \cite{bauder2018bayesian} and \cite{brugiere2020quantitative}.

The rest of the paper is organized as follows. In Section 2, we present a Bayesian model of portfolio and asset returns and discuss its time series properties. Moreover, the posterior predictive distribution is derived together with stochastic representations related to it. In Section 3, we use the obtained stochastic representations to establish expressions for VaR and CVaR of a portfolio and we also extend the analysis to more general risk measures. Section 4 deals with portfolio optimization under parameter uncertainty. The Bayesian quantile-based portfolio optimization problem is solved and conditions for the existence of a global minimum VaR and CVaR portfolio are presented. The mean-VaR and mean-CVaR efficient frontiers are also derived. Section 5 contains a simulation study where our Bayesian approach is compared to the conventional method by analysing the performance and existence of the global minimum VaR portfolio as well as by comparing the estimated efficient frontiers. The comparison is continued in Section 6 with an empirical study based on market data from stocks in the S\&P 500 index. Section 7 contains a conclusion and discussion based on our findings. Finally, all of the proofs are moved to the appendix together with a description of the data.


\section{Bayesian model of portfolio and asset returns}\label{sec:bayesian_model_of_returns}
In this section, we introduce our Bayesian model and derive the posterior predictive distributions of the portfolio return using two different priors. We also make a time series interpretation of the model used to describe the stochastic properties of the asset returns.

\subsection{Posterior predictive distribution of portfolio return}\label{sec:posterior_predictive_dist}
Let $\bX_t$ be a $k$-dimensional vector of logarithmic asset returns at time $t$, i.e., each element $X_{t,i}$, $i=1,...,k$ of $\bX_t$ is defined as 
\begin{equation*}
    X_{t,i} \coloneqq \log\left(\frac{P_{t,i}}{P_{{t-1,i}}}\right),
\end{equation*}
where $P_{t, i}$ denotes the price at time $t$ of asset $i$. Throughout the paper we assume that the asset returns $\mathbf{X}_1,\mathbf{X}_2,...$ are infinitely exchangeable and multivariate centered spherically symmetric \citep[see,][for the definition and properties]{bernardo2009bayesian}. This assumption in particular implies that the asset returns are neither normally nor independently distributed. Moreover, let $\bx_{(t-1)}=(\bx_{t-n},...,\bx_{t-1})$ be the observation matrix of the asset returns $\bx_{t-n}$, ..., $\bx_{t-1}$ taken from time $t-n$ until $t-1$ whose distribution depends on the parameter vector $\btheta$. Bayes theorem provides the posterior distribution of $\btheta$ expressed as
\begin{equation*}
    \pi(\btheta|\bx_{(t-1)}) \propto f(\bx_{(t-1)}|\btheta)\pi(\btheta),
\end{equation*}
where $\pi(\cdot)$ is the prior distribution and $f(\cdot|\btheta)$ is the likelihood function. 

At time $t$ the return of the portfolio with weights $\mathbf{w}=(w_1,\dots,w_k)$, where it is assumed that $\mathbf{1}^\top \bw=1$, is given by
\begin{equation*}
    X_{P,t} \coloneqq \bw^\top\bX_t.
\end{equation*}
The posterior predictive distribution of $X_{P,t}$, i.e., the conditional distribution of $X_{P,t}$ given $\bx_{(t-1)}$, is computed by \citep[see,][]{bernardo2009bayesian}
\begin{equation}
    f(x_{P,t}|\bx_{(t-1)}) = \int_{\theta \in \Theta} f(x_{P,t}|\btheta)\pi(\btheta|\bx_{(t-1)})d\btheta,
    \label{eq:posterior}
\end{equation}
where $\Theta$ denotes the parameter space and $f(\cdot|\btheta)$ is the conditional density of $X_{P,t}$ given $\btheta$. The posterior predictive distribution \eqref{eq:posterior} determines the distribution of the future portfolio return at time $t$ given the information available in the historical data of asset returns up to $t-1$. It can be used to construct a point prediction of the future portfolio return, such as the posterior predictive mean or mode, together with the uncertainty, expressed as the posterior predictive variance. Furthermore, a prediction interval for the portfolio return can be obtained as a posterior predictive credible interval.

The integration in (\ref{eq:posterior}) makes it, usually, impossible to find an analytical expression for the posterior predictive distribution. In such cases, the point prediction of the future portfolio return together with its uncertainty can be obtained by using the Markov Chain Monte Carlo method \citep[see, e.g.,][]{kruger2020predictive}. Moreover, \cite{zellner2010direct} argued that a direct Monte Carlo approach provides an efficient computational method to compute a Bayesian estimation and to determine its uncertainty. Another way to deal with the integration issue is by directly simulating from the posterior predictive distribution by using a stochastic representation \citep[see, e.g., ][]{bauder2018bayesian}. 

Employing the non-informative Jeffreys prior and the informative conjugate prior,  \cite{bauder2018bayesian} derived a stochastic representation of a random variable following the posterior predictive distribution \eqref{eq:posterior} under the assumptions of infinitely exchangeability and multivariate centered spherical symmetry. In this case, the model parameters are the mean vector $\bm{\mu}$ and covariance matrix $\bm{\Sigma}$ of the asset returns (see, Section~\ref{sec:time_series_representation} for the detailed discussion). The Jeffreys prior and the conjugate prior are given by
\begin{equation}\label{eq:jeffreys_prior}
    \pi_J(\bmu,\bSigma) \propto |\bSigma|^{-(k+1)/2},
\end{equation}
and
\begin{equation}\label{eq:conjugate_prior}
    \bmu|\bm{\Sigma} \sim N_k\left(\mathbf{m}_0,\frac{1}{r_0}\bm{\Sigma}\right),
\quad \text{and} \quad
    \bm{\Sigma} \sim IW_k (d_0, \bS_0),
\end{equation}
respectively, where $|\cdot|$ stands for the determinant and $IW_k(d_0, \bS_0)$ denotes the inverse Wishart distribution with $d_0$ degrees of freedom and parameter matrix $\bS_0$ \citep[see,][for the definition and properties]{GuptaNagar2000}. The quantities $\mathbf{m}_0$, $r_0$, $d_0$, and $\bS_0$ are the hyperparameters of the conjugate prior where $\mathbf{m}_0$ and $\bS_0$ reflect the investor's prior belief about the mean vector and covariance matrix, whereas $r_0$ and $d_0$ represent the precision of these beliefs. The two priors \eqref{eq:jeffreys_prior} and \eqref{eq:conjugate_prior} have been widely used in financial literature where the Jeffreys prior is usually also referred to as the diffuse prior  \citep[see, e.g.,][]{Barry1974, BodnarMazurOkhrin2017, Brown1976, stambaugh1997analyzing}, and the conjugate prior is also commonly related to the Black-Litterman model \citep[see,][]{bauder2018bayesian, black1992global, frost1986empirical, kolm2017bayesian}.

The stochastic representations derived by \cite{bauder2018bayesian} fully determine the posterior predictive distribution. They are expressed in terms of two independent standard $t$-distributed random variables whose degrees of freedom depend on the assigned prior. Using these findings, the posterior predictive distribution is deduced and is presented in Theorem~\ref{theorem:predictive_posterior}. Let $t(q,a,b^2)$ denote the univariate $t$-distribution with $q$ degrees of freedom, location parameter $a$ and scale parameter $b$. Then, we obtain the following results.

\begin{theorem}\label{theorem:predictive_posterior}
Let asset returns $\mathbf{X}_1,\mathbf{X}_2,...$ be infinitely exchangeable and multivariate centered spherically symmetric. Then 
\begin{enumerate}
\item under the Jeffreys prior, it holds for $n>k$ that the posterior predictive distribution defined in \eqref{eq:posterior} is $t(d_{k,n,J},\mathbf{w}^T\bar{\bx}_{t-1,J}, r_{k,n,J}\mathbf{w}^T \mathbf{S}_{t-1,J} \mathbf{w})$ with $d_{k,n,J}=n-k$, 
\begin{equation}\label{eq:barx_J}
r_{k,n,J}=\frac{n+1}{n(n-k)},\quad \bar{\bx}_{t-1,J}=\frac{1}{n}\sum_{i=t-n}^{t-1}\bx_i,
\end{equation}
and
\begin{equation}\label{eq:S_j}
\mathbf{S}_{t-1,J} =\sum_{i=t-n}^{t-1}(\bx_i-\mathbf{\bar{x}}_{t-1,J})
(\bx_i-\mathbf{\bar{x}}_{t-1,J})^T;
\end{equation}
\item under the conjugate prior, it holds for $n+d_0-2k>0$ that the posterior predictive distribution defined in \eqref{eq:posterior} is $t(d_{k,n,C},\mathbf{w}^T\bar{\bx}_{t-1,C},r_{k,n,C}\mathbf{w}^T \mathbf{S}_{t-1,C} \mathbf{w})$ with $d_{k,n,C}=n+d_0-2k$,
\begin{equation}
r_{k,n,C}=\frac{n+r_0+1}{(n+r_0)(n+d_0-2k)}, \quad 
\bar{\bx}_{t-1, C} = \frac{n\bar{\bx}_{t-1,J}+r_0\mathbf{m_0}}{n+r_0},
\end{equation}
and
\begin{equation}
\bS_{t-1,C} = \bS_{t-1,J}+\bS_0+nr_0\frac{(\mathbf{m_0}-\bar{\bx}_{t-1, C})(\mathbf{m_0}-\bar{\bx}_{t-1, C})^\top}{n+r_0}.
\end{equation}
\end{enumerate}
\end{theorem}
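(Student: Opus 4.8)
The plan is to reduce the statement to classical Bayesian inference for the multivariate normal model and then transport the resulting multivariate $t$ predictive law through the linear map $\bx\mapsto\bw^\top\bx$. First I would invoke the representation theorem for infinitely exchangeable, multivariate centered spherically symmetric sequences \citep{bernardo2009bayesian}: it produces random parameters $\btheta=(\bmu,\bSigma)$ such that, conditionally on $\btheta$, the returns $\bX_{t-n},\dots,\bX_{t-1},\bX_t$ are i.i.d.\ $N_k(\bmu,\bSigma)$. Hence $\bx_{(t-1)}$ is a conditionally i.i.d.\ Gaussian sample, the future return $\bX_t$ is an independent fresh Gaussian draw, and $f(x_{P,t}\mid\btheta)$ in \eqref{eq:posterior} is the $N(\bw^\top\bmu,\bw^\top\bSigma\bw)$ density; this is exactly the model analysed by \cite{bauder2018bayesian}.

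Second, for each prior I would assemble the posterior $\pi(\bmu,\bSigma\mid\bx_{(t-1)})$ by the standard Normal--Inverse-Wishart conjugacy algebra. Completing the square in $\bmu$ gives $\bmu\mid\bSigma,\bx_{(t-1)}\sim N_k(\bar{\bx}_{t-1,J},\bSigma/n)$ under the Jeffreys prior \eqref{eq:jeffreys_prior} and $\bmu\mid\bSigma,\bx_{(t-1)}\sim N_k(\bar{\bx}_{t-1,C},\bSigma/(n+r_0))$ under the conjugate prior \eqref{eq:conjugate_prior}, while integrating $\bmu$ out leaves an inverse-Wishart marginal posterior for $\bSigma$ whose scale matrix is $\bS_{t-1,J}$ in \eqref{eq:S_j}, respectively $\bS_{t-1,C}$, and whose degrees-of-freedom parameter is built additively from $n$ and from the prior (the $k+1$ appearing in \eqref{eq:jeffreys_prior}, respectively $d_0$). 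Marginalising $\bmu$ from $\bX_t\mid\bmu,\bSigma$ then shows $\bX_t\mid\bSigma,\bx_{(t-1)}$ is Gaussian centered at the posterior mean of $\bmu$ with covariance $(1+1/n)\bSigma$ in the Jeffreys case and $(1+1/(n+r_0))\bSigma$ in the conjugate case; integrating $\bSigma$ against its inverse-Wishart posterior turns this Gaussian-with-random-covariance into a multivariate $t$ for $\bX_t\mid\bx_{(t-1)}$ with the degrees of freedom, location vector and scale matrix claimed in the theorem. Equivalently, these quantities can be read off the stochastic representations already established in \cite{bauder2018bayesian}.

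Third, I would finish with the elementary fact that a linear combination $\bw^\top\bZ$ of a multivariate $t$ vector $\bZ$ with $\nu$ degrees of freedom, location $\mathbf{a}$ and scale matrix $\bB$ is univariate $t(\nu,\bw^\top\mathbf{a},\bw^\top\bB\bw)$ \citep{GuptaNagar2000}; applying this with $\bZ=\bX_t\mid\bx_{(t-1)}$ delivers exactly the stated $t(d_{k,n,\cdot},\bw^\top\bar{\bx}_{t-1,\cdot},r_{k,n,\cdot}\,\bw^\top\bS_{t-1,\cdot}\bw)$ law. The constraints $n>k$ (Jeffreys) and $n+d_0-2k>0$ (conjugate) are precisely what is needed for the inverse-Wishart marginal posterior, hence the predictive $t$, to be a proper distribution, and for $n>k$ they also guarantee that $\bS_{t-1,J}$ is almost surely positive definite.

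I expect the main obstacle to be the bookkeeping of the two constants $d_{k,n,\cdot}$ and the scale factor $r_{k,n,\cdot}$. The factor $r_{k,n,\cdot}$ must correctly combine three contributions: the unit variance of the new observation, the posterior uncertainty about $\bmu$ (the $1/n$, respectively $1/(n+r_0)$, term), and the normalisation $1/d_{k,n,\cdot}$ coming from converting the inverse-Wishart scale into the scale of the resulting $t$; and $d_{k,n,\cdot}$ is sensitive to the parametrisation convention adopted for the inverse-Wishart distribution in \eqref{eq:conjugate_prior}, so that piece has to be handled carefully. The representation-theoretic reduction in the first step and the linear-combination property in the last step are standard and contribute little difficulty.
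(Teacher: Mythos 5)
Your plan is correct, but it takes a genuinely different route from the paper's. The paper does not redo the Normal--Inverse-Wishart conjugacy algebra at all: it quotes from \cite{bauder2018bayesian} the stochastic representation $\widehat{X}_{P,t}\eqdist m+\sqrt{s}\bigl(\tau_1/\sqrt{vd}+\sqrt{1+\tau_1^2/d}\,\tau_2/\sqrt{d+1}\bigr)$ with $\tau_1\sim t(d)$ and $\tau_2\sim t(d+1)$ independent, and the entire technical content of its proof is Lemma~\ref{prop:sum_of_t_dist}, which shows by a joint-density computation (the pair $(z,\tau_1)$ has a bivariate $t$ kernel, so the marginal of $z$ is univariate $t(d)$ with scale $\sqrt{(v+1)/vd}$) that this two-variable combination collapses to a single $t$ variable with the claimed degrees of freedom and scale. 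You instead reconstruct the posterior $\pi(\bmu,\bSigma\mid\bx_{(t-1)})$, marginalise $\bmu$ and then $\bSigma$ to get a multivariate $t$ predictive for $\bX_t$, and project linearly via $\bw^\top$. Your route is longer but self-contained and, as a by-product, directly delivers the multivariate predictive laws \eqref{eq:dis_bX_t}--\eqref{eq:dis_bX_j} that the paper only asserts afterwards in Section~\ref{sec:time_series_representation}; the paper's route is shorter given the cited representation, but the collapse of two independent $t$ variables into one --- the step your approach never needs --- is exactly where its nontrivial work sits. Your closing caveats are apt: the constants are indeed hostage to the inverse-Wishart parametrisation of \cite{GuptaNagar2000}, and $r_{k,n,\cdot}$ must combine precisely the three contributions you list, so the bookkeeping is where a write-up of your version would need the most care.
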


The proof of Theorem~\ref{theorem:predictive_posterior} is given in the appendix. 

Since the structure of the posterior predictive distributions are similar, we introduce new notations which allow us to combine the two findings of Theorem~\ref{theorem:predictive_posterior} into a single result. Let
\begin{equation}\label{eq:notations_jeffreys_conjugate}
\left\{
\begin{matrix}
d_{k,n}=d_{k,n,J},\; r_{k,n}=r_{k,n,J},\;\bar{\bx}_{t-1}=\bar{\bx}_{t-1,J} , \;   \mathbf{S}_{t-1}=\mathbf{S}_{t-1,J}, & \text{under the Jeffreys prior,}\\
d_{k,n}=d_{k,n,C},\; r_{k,n}=r_{k,n,C},\; \bar{\bx}_{t-1}=\bar{\bx}_{t-1,C} , \;  \mathbf{S}_{t-1}=\mathbf{S}_{t-1,C}, & \text{under the conjugate prior.}
\end{matrix}
\right.
\end{equation}
From Theorem~\ref{theorem:predictive_posterior} we deduce that the posterior predictive distribution under both priors can be expressed as $t(d_{k,n},\mathbf{w}^T\bar{\bx}_{t-1},r_{k,n}\mathbf{w}^T \mathbf{S}_{t-1} \mathbf{w})$ with $d_{k,n}$, $r_{k,n}$, $\bar{\bx}_{t-1}$, and $\mathbf{S}_{t-1}$ as in \eqref{eq:notations_jeffreys_conjugate}. Let $\widehat{X}_{P,t}$ denote a random variable which follows the posterior predictive distribution, i.e., whose distribution coincides with the conditional distribution of portfolio return $X_{P,t}$ given the information available up to time $t-1$. Application of Theorem~\ref{theorem:predictive_posterior} together with \eqref{eq:notations_jeffreys_conjugate} then gives the following stochastic representation of $\widehat{X}_{P,t}$
\begin{equation}\label{eq:stochastic_rep_common}
\widehat{X}_{P,t} \eqdist \mathbf{w}^T\bar{\bx}_{t-1}+\tau\sqrt{r_{k,n}}\sqrt{\mathbf{w}^T\bS_{t-1}\mathbf{w}},
\end{equation}
where $\tau \sim t(d_{k,n})$ with $t(d_{k,n})$ denoting the standard $t$-distribution with $d_{k,n}$ degrees of freedom, i.e., the $t$-distribution with zero location parameter and scale parameter equal to one. Representation \eqref{eq:stochastic_rep_common} immediately implies that
\begin{equation} \label{eq:stochastic_rep_expectation_and_variance}
    \E(\widehat{X}_{P,t})=\E(X_{P,t}|\bx_{(t-1)})=\bw^\top\bar{\bx}_{t-1}, \; \Var(\widehat{X}_{P,t})=\Var(X_{P,t}|\bx_{(t-1)})=\frac{d_{k,n}r_{k,n}}{d_{k,n}-2}\bw^\top \bS_{t-1}\bw,
\end{equation}
for $d_{k,n}>1$ and $d_{k,n}>2$, respectively.

\subsection{Time series representation of asset returns}\label{sec:time_series_representation}

In the financial literature it is common to describe asset returns in terms of a time series model \citep[see, e.g.,][]{tsay2005analysis}. In order to make comparisons to such models easier, we now give a time series representation of the considered Bayesian model. 

The assumption of infinite exchangeability implies following De Finetti's theorem that there exists a probability measure conditionally on which the asset returns are independent and identically distributed \citep[see,][]{kingman1978uses}. Moreover, the assumption of multivariate centered spherically symmetry determines the conditional distribution of the asset returns. Namely, we get that the asset returns conditionally on the mean vector $\bol{\mu}$ and on the covariance matrix $\bSigma$ are independent and normally distributed \citep[see, Proposition 4.6 in][]{bernardo2009bayesian}. That is

\begin{equation}\label{eq:cond_model}
    \mathbf{X}_1,\mathbf{X}_2,...,\mathbf{X}_t|\bol{\mu},\bSigma \text{ are independent with}\quad \mathbf{X}_i|\bol{\mu},\bSigma \sim \mathcal{N}_k(\bol{\mu},\bSigma).
\end{equation}

Model \eqref{eq:cond_model} should not be confused with the assumption that the asset returns are independent and normally distributed as it is usually presented in financial and econometric literature. The two models coincide only when $\bmu$ and $\bSigma$ possess probability distributions concentrated in single points. This would imply that the investor knows  $\bmu$ and $\bSigma$ with certainty, which is not the case in practice.

In order to show the considerable difference between the model \eqref{eq:cond_model} and the model assuming independent and normally distributed asset returns, we next derive the marginal distribution of $\mathbf{X}_1$, $\mathbf{X}_2$, ..., $\mathbf{X}_t$ by using the findings of Theorem~\ref{theorem:predictive_posterior} and the stochastic representation \eqref{eq:stochastic_rep_common}. The results will be obtained by assigning the Jeffreys prior to the model parameters $\bmu$ and $\bSigma$ as well as the conjugate prior.

In Theorem~\ref{theorem:predictive_posterior} we showed that given a sample of size $n>k$, the posterior predictive distribution of $X_{P,t} = \bw^\top\bX_t$ given $\bX_{t-n},...,\bX_{t-1}$ has a univariate $t$-distribution under both priors (see, also the stochastic representation \eqref{eq:stochastic_rep_common}). Since $\bw$ is an arbitrary vector with the only restriction $\mathbf{1}^\top \bw=1$, choosing $n=t-1$ we get 
\begin{equation}\label{eq:dis_bX_t}
    \bX_t|\bX_{1},...,\bX_{t-1}\sim t_k(d_{k,t-1},\bar{\bx}_{1:t-1},r_{k,t-1}\bS_{1:t-1}),
\end{equation}
and, similarly for $j=k+2,....,t-1$ it holds that
\begin{equation}\label{eq:dis_bX_j}
    \bX_{j+1}|\bX_{1},...,\bX_{j}\sim t_k(d_{k,j},\bar{\bx}_{1:j},r_{k,j}\bS_{1:j}),
\end{equation}
where $d_{k,j}$ and $r_{k,j}$ for $j=k+2,...,t-1$ are defined as in \eqref{eq:notations_jeffreys_conjugate}; $\bar{\bx}_{1:j}$ and $\bS_{1:j}$ are defined similarly to $\bar{\bx}_{t-1}$ and $\bS_{t-1}$, where the sums in \eqref{eq:barx_J} and \eqref{eq:S_j} are from $1$ to $j$. The symbol $t_k(\cdot,\cdot,\cdot)$ stands for the $k$-dimensional multivariate $t$-distribution. 

The last result can also be written in the time series context in the following way
\begin{equation}\label{eq:ts_bX_j}
    \bX_{j+1}=\bar{\bx}_{1:j}+\sqrt{r_{k,j}}\bS_{1:j}^{1/2} \bol{\varepsilon}_j 
    \quad \text{for}\quad j=k+2,....,t-1,
\end{equation}
where $\bol{\varepsilon}_j \sim t_k(d_{k,j},\mathbf{0},\mathbf{I})$ is the error term and $\bS_{1:j}^{1/2}$ denotes a square root of $\bS_{1:j}$. Finally, we get that the joint density of $\bX_{1},...,\bX_{k+2}$ is given by \citep[see Appendix B in][]{bauder2020bayesian}
\begin{equation}
    f(\bx_1,...,\bx_{k+2}) \propto |\bS_{1:k+2}|^{-\frac{d_{k,k+2}+k-1}{2}}.
\end{equation}

The derived time series model for $\bX_{1},...,\bX_{t}$ indicates that a complicated nonlinear dependence structure is present in the conditional model \eqref{eq:cond_model}. Moreover, model \eqref{eq:cond_model} can be used to capture the time-dependent structure in the dynamics of both the conditional mean vector and the conditional covariance matrix of the asset returns which is usually observed in practice \citep[see, e.g.,][]{tsay2005analysis}. Finally, the error terms in the time series model for $\bX_{1},...,\bX_{t}$ in \eqref{eq:ts_bX_j} are multivariate $t$-distributed with degrees of freedom depending on the amount of available information. This gives rise to an additional source of non-stationarity and allows for heavier tails than in a model where asset returns are assumed to be independent and normally distributed corresponding to the conventional approach.


\section{Quantile-based risk measures}\label{sec:quantile_based_risk_measures}
The quantile based risk-measures value at risk (VaR) and conditional value at risk (CVaR) are introduced in the following section and we derive their analytical expressions in our Bayesian setting using the posterior predictive distribution. Moreover, we extend the analysis to more general risk measures.

\subsection{Posterior predictive VaR and posterior predictive CVaR}\label{sec:posterior_predictive_VaR_and_CVaR}

The results of Theorem~\ref{theorem:predictive_posterior} provides an easy way to sample from the posterior predictive distribution as well as to find its quantiles. The two most common quantile-based risk measures used in the literature are VaR and CVaR which we define in this section using the posterior predictive distribution of the portfolio return.

Denoting by $\widehat{X}_{P,t}$ a random variable whose distribution coincides with the posterior predictive distribution of the portfolio return $X_{P,t}$ given $\bx_{(t-1)}$ and using that the posterior predictive distribution is absolutely continuous, the two quantile-based risk measures at confidence level $\alpha \in (0.5,1)$ are defined by
$$\VaR_{\alpha,t-1}(\widehat{X}_{P,t}) := F_{Y,t-1}^{-1}(\alpha)$$ 
and 
$$\CVaR_{\alpha,t-1} (\widehat{X}_{P,t}) := \optn{E}[Y | Y \geq \VaR_{\alpha,t-1}(\widehat{X}_{P,t})],$$
respectively, where $Y := -\widehat{X}_{P,t}$ is the portfolio loss with cumulative distribution function $F_{Y,t-1}(\cdot)$.  

In the following, it is implicit that the probability and expectation in the definitions of the VaR and CVaR are formulated in terms of the posterior predictive distribution and they are conditioned on previous asset returns $\bx_{(t-1)}$. Moreover, the posterior predictive distribution is free of parameters and is fully determined by the observed data. This constitutes the main advantage of the suggested approach, namely it takes the parameter uncertainty into account before the optimal portfolio is constructed, i.e., the optimisation problem is formulated and solved. We discuss this point in detail in Section~\ref{sec:portfolio_optimization}.

By definition, $\VaR_{\alpha,t-1}(\widehat{X}_{P,t})$ satisfies
\begin{equation}\label{eq:VaR_defintion}
    P\left(\widehat{X}_{P,t} \le -\VaR_{\alpha,t-1}(\widehat{X}_{P,t})\right) = 1-\alpha.
\end{equation}
Rewriting the left hand side of \eqref{eq:VaR_defintion} yields
\begin{equation*}
\begin{split}
    P\left(\mathbf{w}^T\bar{\bx}_{t-1}+\tau\sqrt{r_{k,n}}\sqrt{\mathbf{w}^T\bS_{t-1}\mathbf{w}}\le -\VaR_{\alpha,t-1}(\widehat{X}_{P,t})\right) \\
    = P\left(\tau \le \frac{-\VaR_{\alpha}(\widehat{X}_{P,t}) - \mathbf{w}^T\bar{\bx}_{t-1}}
    {\sqrt{r_{k,n}}\sqrt{\mathbf{w}^T\bS_{t-1}\mathbf{w}}}\right),
 \end{split}
\end{equation*}
where $\tau$ is standard $t$-distributed with degrees of freedom $d_{k,n}$ defined in \eqref{eq:notations_jeffreys_conjugate} depending on the prior. Hence,
\begin{equation*}
    \frac{-\VaR_{\alpha,t-1}(\widehat{X}_{P,t}) - \mathbf{w}^T\bar{\bx}_{t-1}}
    {\sqrt{r_{k,n}}\sqrt{\mathbf{w}^T\bS_{t-1}\mathbf{w}}} = d_{1-\alpha},
\end{equation*}
where $d_{1-\alpha}$ is the $(1-\alpha)$ quantile of the $t$-distribution with $d_{k,n}$ degrees of freedom which satisfies $d_{\alpha}=-d_{1-\alpha}$ due to the symmetry of the $t$-distribution. Thus,
\begin{equation}\label{eq:VaR}
    \VaR_{\alpha,t-1}(\widehat{X}_{P,t})  = -\mathbf{w}^T\bar{\bx}_{t-1} +d_{\alpha}\sqrt{r_{k,n}}\sqrt{\mathbf{w}^T\bS_{t-1}\mathbf{w}}.
\end{equation}

Similarly, it follows by the definition of CVaR that
\begin{equation}\label{eq:CVaR}
    \CVaR_{\alpha,t-1}(\widehat{X}_{P,t}) = \E\left(-\widehat{X}_{P,t} |- \widehat{X}_{P,t}  \ge \VaR_{\alpha,t-1}(\widehat{X}_{P,t})\right)  
    =-\mathbf{w}^T\bar{\bx}_{t-1} +k_{\alpha}\sqrt{r_{k,n}}\sqrt{\mathbf{w}^T\bS_{t-1}\mathbf{w}}, 
\end{equation}
with
\begin{eqnarray*}
k_{\alpha}&=&\E\left(-\tau | -\tau \ge d_{\alpha}\right)
=\frac{1}{1-\alpha} \int_{d_{\alpha}}^{\infty} t f_{d_{k,n}}(t) \mbox{d}t
\\
&=&  \frac{1}{1-\alpha}\frac{\Gamma\left(\frac{d_{k,n}+1}{2}\right)}
    {\Gamma\left(\frac{d_{k,n}}{2}\right)
    \sqrt{\pi d_{k,n}}}
    \frac{d_{k,n}}{d_{k,n}-1}
    \left(1+\frac{d_{\alpha}^2}{d_{k,n}}\right)^{-\frac{d_{k,n}-1}{2}},
\end{eqnarray*}
where $f_{d_{k,n}}(t)$ denotes the density of the $t$-distribution with $d_{k,n}$ degrees of freedom and we use that the distribution of $-\tau$ coincides with $\tau$ due to the symmetry of the $t$-distribution. 

The expressions for VaR and CVaR given in \eqref{eq:VaR} and \eqref{eq:CVaR}, respectively, can be presented in the following form
\begin{equation}\label{eq:VaR_CVaR_common}
    Q_{t-1}(\bw) =-\mathbf{w}^T\bar{\bx}_{t-1} +q_{\alpha}\sqrt{r_{k,n}}\sqrt{\mathbf{w}^T\bS_{t-1}\mathbf{w}},
\end{equation}
where $q_{\alpha}=d_{\alpha}$ when considering VaR and $q_{\alpha}=k_{\alpha}$ when considering CVaR. This general formulation will be used extensively in the rest of the paper in order to handle VaR and CVaR cases simultaneously. 

Since $\alpha \in (0.5,1)$ and the $t$-distribution is symmetric, we get that $d_{\alpha}>0$. Moreover, we have that $k_{\alpha}>0$ by definition. As a result, $q_{\alpha}>0$ which together with the convexity of $\sqrt{\mathbf{w}^T\bS_{t-1}\mathbf{w}}$ implies that

\begin{theorem}\label{theorem:convexity_of_VaR_and_CVaR}
Under the conditions of Theorem~\ref{theorem:predictive_posterior}, $Q_{t-1}(\bw)$ is convex with respect to $\mathbf{w}$.
\end{theorem}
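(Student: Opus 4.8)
The plan is to show that $Q_{t-1}(\bw)$ decomposes into a linear part and a positive multiple of a norm, and that each summand is convex. Starting from the unified representation \eqref{eq:VaR_CVaR_common},
\begin{equation*}
    Q_{t-1}(\bw) = -\mathbf{w}^T\bar{\bx}_{t-1} + q_{\alpha}\sqrt{r_{k,n}}\,\sqrt{\mathbf{w}^T\bS_{t-1}\mathbf{w}},
\end{equation*}
I would first observe that the map $\bw \mapsto -\mathbf{w}^T\bar{\bx}_{t-1}$ is affine and hence convex. Next I would argue that $\bS_{t-1}$ is positive semidefinite: under the Jeffreys prior it is a sum of rank-one outer products $(\bx_i-\bar{\bx}_{t-1})(\bx_i-\bar{\bx}_{t-1})^\top$, and under the conjugate prior it is this same matrix plus $\bS_0$ (the scale matrix of an inverse Wishart, hence positive definite) plus another positive semidefinite rank-one term, so in either case $\bS_{t-1}\succeq 0$. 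Consequently $\bw \mapsto \sqrt{\mathbf{w}^T\bS_{t-1}\mathbf{w}}$ is a seminorm (it equals $\|\bS_{t-1}^{1/2}\bw\|_2$, the composition of a linear map with the Euclidean norm), and a seminorm is convex by the triangle inequality together with absolute homogeneity.

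Then I would invoke the fact that a nonnegative linear combination of convex functions is convex: since $q_\alpha > 0$ (established just before the theorem statement, using $\alpha \in (0.5,1)$, the symmetry of the $t$-distribution for the VaR case $q_\alpha = d_\alpha$, and the definition of $k_\alpha$ for the CVaR case) and $\sqrt{r_{k,n}} > 0$, the second summand is a positive scalar times a convex function, hence convex; adding the affine first summand preserves convexity. This yields convexity of $Q_{t-1}(\bw)$ on all of $\real^k$, and in particular on the affine budget set $\{\bw : \mathbf{1}^\top\bw = 1\}$.

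There is essentially no hard obstacle here — the result is a short consequence of standard convexity facts. The only point that warrants a line of care is the positive semidefiniteness of $\bS_{t-1}$ under the conjugate prior, where one must note that $\bS_0$ is positive definite (being the parameter matrix of a proper inverse Wishart distribution, per the assumption underlying Theorem~\ref{theorem:predictive_posterior}) and that $nr_0(\mathbf{m}_0 - \bar{\bx}_{t-1,C})(\mathbf{m}_0 - \bar{\bx}_{t-1,C})^\top/(n+r_0) \succeq 0$ since $n, r_0 > 0$. One might also remark that the degrees-of-freedom conditions ($n > k$ for Jeffreys, $n + d_0 - 2k > 0$ for the conjugate prior) are already in force via ``the conditions of Theorem~\ref{theorem:predictive_posterior},'' so $q_\alpha$ is finite and well-defined. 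With these observations in place the proof is a two-sentence assembly of the pieces.
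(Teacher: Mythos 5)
Your proposal is correct and follows essentially the same route as the paper: write $Q_{t-1}(\bw)$ as an affine term plus $q_\alpha\sqrt{r_{k,n}}$ times $\|\bS_{t-1}^{1/2}\bw\|_2$, and use positivity of $q_\alpha$ together with convexity of the (semi)norm. Your extra care in verifying that $\bS_{t-1}$ is positive (semi)definite under each prior is a minor refinement of the paper's bare assertion that $\bS_{t-1}$ is positive definite, not a different argument.
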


The proof of Theorem~\ref{theorem:convexity_of_VaR_and_CVaR} is given in the appendix.

\subsection{Extension to general risk measures}\label{sec:general_risk_measure}

The stochastic representation in equation \eqref{eq:stochastic_rep_common} shows that the general presentation of the posterior predictive VaR and of the posterior predictive CVaR as given in \eqref{eq:VaR_CVaR_common} can be extended to other risk measures used in portfolio theory and risk management. Such a result will provide a possibility to formulate and to solve the portfolio choice problem under more general setups considered in financial mathematics which is based on the context of coherent risk functionals \citep[see, e.g.,][]{artzner1999coherent}. Namely, instead of considering the VaR and the CVaR to specify the portfolio risk, one can choose any risk measure which is relevant, translation invariant and positive homogeneous. These properties are all present in the definition of a coherent risk measure. For a risk functional $\rho$, this can be formulated as
\begin{enumerate}
    \item [(a)] \textit{Relevance}: For all $X_{P,t} \leq 0$ we have $\rho(X_{P,t})\ge 0$. 
    \item [(b)] \textit{Translation invariance}: For any real scalar $a$, $\rho(X_{P,t} + a) = \rho(X_{P,t}) - a$.
    \item [(c)] \textit{Positive homogeneity}: For a positive scalar $\lambda$, $\rho(\lambda X_{P,t})=\lambda \rho(X_{P,t})$.
\end{enumerate}

The relevance property provides us with a foundation for the interpretation of risk measures. From an investor's point of view, a risk measure cannot assign a negative risk to a certain loss. That is often, if not always, how we think about risk. 
The property of translation invariance states that if one adds more cash to the portfolio or invest a larger proportion of the capital in something that is deemed as risk-free, then the risk-exposure should decrease by the corresponding amount. The property of positive homogeneity can be thought of as using leverage to acquire another (possibly more aggressive) position in the market. Such a position will increase the risk accordingly. 

When the asset returns $\mathbf{X}_1,\mathbf{X}_2,...$ are infinitely exchangeable and multivariate centered spherically symmetric, then the analytical expression of (posterior) general risk measures is deduced from \eqref{eq:stochastic_rep_common} and it is given by
\begin{align}
    \rho_{t-1}(\widehat{X}_{P,t}) 
        & = \rho_{t-1}(\mathbf{w}^T\bar{\bx}_{t-1}+\tau\sqrt{r_{k,n}}\sqrt{\mathbf{w}^T\bS_{t-1}\mathbf{w}}) \nonumber\\
        & = -\mathbf{w}^T\bar{\bx}_{t-1} +\rho_{t-1}(\tau)\sqrt{r_{k,n}}\sqrt{\mathbf{w}^T\bS_{t-1}\mathbf{w}},\label{gen_risk_measure}
\end{align}
which coincides with \eqref{eq:VaR_CVaR_common} when we set $\rho_{t-1}(\tau)=q_{\alpha}$. To this end, we note that $\rho_{t-1}(\tau)$ does not depend on the portfolio weights $\mathbf{w}$. As a result, finding the optimal portfolio by optimizing \eqref{gen_risk_measure} is the same problem as finding the optimal portfolio based on \eqref{eq:VaR_CVaR_common}. Since the conditions of relevance,  translation invariance and positive homogeneity are present in the definition of a coherent risk measure, any (posterior) coherent risk measure should satisfy \eqref{gen_risk_measure} and, consequently, the results of the next section can be applied.


\section{Portfolio optimization}\label{sec:portfolio_optimization}
We now turn to the theory related to portfolio optimization and present the solutions of the quantile-based optimal portfolio choice problems from both the Bayesian and the frequentist points of view. 

First, we describe existent results related to the conventional portfolio optimization problems. This is followed by theory related to the Bayesian quantile-based approach which relies on some previous findings. Special focus is put on the global minimum VaR and global minimum CVaR portfolios. The section ends with a derivation of the efficient frontier in the mean-quantile space.

\subsection{Portfolio optimization problems and existing solutions} \label{sec:conventional_portfolio_optimization}
The mean-variance optimization problem of \cite{markowitz1952} and its solution provide a fundamental concept to practical asset allocation. It was originally formulated in terms of the population parameters of the asset return distribution: mean vector $\bmu$ and covariance matrix $\bSigma$. As a result, the solution of Markowitz's optimization problem depends on these unknown quantities which have to be estimated before the implementation in practice. 

The population expected portfolio return of the portfolio with weights $\bw$ and its population variance are given by
$R_P(\bw)=\bw^\top \bol{\mu}$ and $V_P(\bw)=\bw^\top\bSigma\bw$, respectively. Then Markowitz's optimization problem is given by
\begin{equation}\label{eq:conventional_mv_problem}
\min_{\bw:\, R_P(\bw)=R_0,\,\bw^\top\mathbf{1}=1} V_P(\bw),
\end{equation}
where $\mathbf{1}$ denotes the $k$-dimensional vector of ones. The solution of \eqref{eq:conventional_mv_problem} is expressed as
\begin{equation}\label{eq:conventional_mv_solution}
\bw_{MV}=\bw_{GMV}+\frac{R_0-R_{GMV}}{s}\bM \bol{\mu}
\quad \text{with} \quad
\bM=\bSigma^{-1}-\frac{\bSigma^{-1}\mathbf{1}\mathbf{1}^\top\bSigma^{-1}}
{\mathbf{1}^\top\bSigma^{-1}\mathbf{1}},
\end{equation}
where
\begin{equation}\label{eq:conventional_weights_gmv}
\bw_{GMV}=\frac{\bSigma^{-1}\mathbf{1}}{\mathbf{1}^\top\bSigma^{-1}\mathbf{1}}
\end{equation}
are the weights of the global minimum variance (GMV) portfolio whose population expected return and population variance are given by
\begin{equation}\label{eq:conventional_return_and_variance_gmv}
R_{GMV}=\frac{\mathbf{1}^\top\bSigma^{-1}\bmu}{\mathbf{1}^\top\bSigma^{-1}\mathbf{1}}
\quad \text{and} \quad
V_{GMV}=\frac{1}{\mathbf{1}^\top\bSigma^{-1}\mathbf{1}}\,.
\end{equation}
The quantity $s$ is the slope parameter of the mean-variance efficient frontier, the set of all optimal portfolios in the mean-variance space, and it is given by
\begin{equation}
s=\bol{\mu}^\top \bM \bol{\mu}.
\end{equation}
The efficient frontier itself is a parabola given by \citep[see, e.g., ][]{Bodnar2009Econometricalanalysisofthesampleefficientfrontier,kan2008distribution, merton1972}
\begin{equation}\label{eq:conventional_mv_ef}
(R-R_{GMV})^2=s(V-V_{GMV}).    
\end{equation}
 
Recently, Makowitz's optimization problem has been reformulated from the Bayesian perspectives by \cite{bauder2018bayesian}. Using the portfolio expected return $R_{t-1}(\bw)=\E(X_{P,t}|\bx_{(t-1)})$ and the portfolio variance $V_{t-1}(\bw)=\Var(X_{P,t}|\bx_{(t-1)})$ computed from the posterior predictive distribution as in \eqref{eq:stochastic_rep_expectation_and_variance}, Markowitz's optimization problem from the Bayesian point of view is given by
\begin{equation}\label{eq:bayesian_mv_problem}
\min_{\bw:\, R_{t-1}(\bw)=R_0,\,\bw^\top\mathbf{1}=1} V_{t-1}(\bw),
\end{equation}
with the solution expressed as
\begin{equation}\label{eq:bayesianl_mv_solution}
\bw_{MV,t-1}=\bw_{GMV,t-1}+\frac{R_0-R_{GMV,t-1}}{s_{t-1}}\bM_{t-1} \bar{\bx}_{t-1}
\quad \text{with} \quad
\bM_{t-1}=\bS_{t-1}^{-1}-\frac{\bS_{t-1}^{-1}\mathbf{1}\mathbf{1}^\top\bS_{t-1}^{-1}}
{\mathbf{1}^\top\bS_{t-1}^{-1}\mathbf{1}},
\end{equation}
where
\begin{equation}\label{eq:bayesian_gmv_weights}
\bw_{GMV,t-1}=\frac{\bS_{t-1}^{-1}\mathbf{1}}{\mathbf{1}^\top\bS_{t-1}^{-1}\mathbf{1}}, \quad
R_{GMV,t-1}=\frac{\mathbf{1}^\top\bS_{t-1}^{-1}\bar{\bx}_{t-1}}
{\mathbf{1}^\top\bS_{t-1}^{-1}\mathbf{1}},
\quad \text{and} \quad
V_{GMV,t-1}=\frac{d_{k,n}r_{k,n}}{d_{k,n}-2}\frac{1}{\mathbf{1}^\top\bS_{t-1}^{-1}\mathbf{1}}\,.
\end{equation}
The quantity $s_{t-1}$ is one of the factors determining the slope of the Bayesian efficient frontier in the mean-variance space and it is given by
\begin{equation}\label{eq:bayesian_slope}
s_{t-1}=\bar{\bx}_{t-1}^\top\bM_{t-1}\bar{\bx}_{t-1}.
\end{equation}
Also from the Bayesian perspective, the efficient frontier is a parabola expressed as \citep[see,][]{bauder2018bayesian}
\begin{equation}\label{eq:bayesian_ef_mv}
(R-R_{GMV,t-1})^2=\frac{d_{k,n}-2}{d_{k,n}r_{k,n}}s_{t-1}(V-V_{GMV,t-1}).
\end{equation}

In contrast to the population optimal portfolios and the efficient frontier, the Bayesian optimal portfolio and the Bayesian efficient frontier are presented in terms of the historical data that are observable up to time $t-1$, when the optimal portfolio for the next period is constructed. Moreover, the Bayesian portfolio allocation is based on the predictive posterior distribution and incorporates the parameter uncertainty in the decision process before the weights of optimal portfolios are computed.

Assuming asset returns to be normally distributed, \cite{alexander2002economic, alexander2004comparison} extended Markowitz's optimization problem by replacing the population portfolio variance in \eqref{eq:conventional_mv_problem} with the population VaR and CVaR, respectively, given by
\begin{equation}\label{eq:VaR_CVaR_common_pop}
    Q_P(\bw) =-\mathbf{w}^T\bmu +q_{P;\alpha}\sqrt{\mathbf{w}^T\bSigma\mathbf{w}},
\end{equation}
where $q_{P;\alpha}=z_{\alpha}$ for the VaR and $q_{P;\alpha}=\dfrac{\exp\left(-z_{\alpha}^2 / 2\right)}{(1-\alpha)
\sqrt{2\pi}}$ for the CVaR where $z_{\alpha}$ denotes the $\alpha$-quantile of the standard normal distribution. 

The quantile-based optimization problems of \cite{alexander2002economic,alexander2004comparison} are given by
\begin{equation}\label{eq:conventional_min_Q}
\min_{\bw:\, R_{P}(\bw)=R_0,\,\bw^\top\mathbf{1}=1} Q_P(\bw).
\end{equation}
If the constraint on the expected return is omitted in \eqref{eq:conventional_min_Q}, then the solutions of \eqref{eq:conventional_min_Q} are the weights of the population optimal portfolios with the smallest values of VaR (or CVaR) at confidence level $\alpha$ given by \citep[see,][]{bodnar2012minimum}
\begin{equation}\label{eq:conventional_GMQ_weights}
\bw_{GMQ}=\bw_{GMV}+\frac{\sqrt{V_{GMV}}}{\sqrt{q_{P;\alpha}^2-s}}\bM \bol{\mu}.
\end{equation}
Similarly to the mean-variance portfolio, the weights \eqref{eq:conventional_GMQ_weights} of the population minimum VaR (or CVaR) portfolio cannot be computed. First, the unknown population parameters $\bmu$ and $\bSigma$ should be estimated by using historical data of asset returns and, then, the estimator of $\bw_{GMQ}$ is constructed as a proxy of the true portfolio weights. This two-step procedure of constructing an optimal portfolio usually leads to sub-optimal solutions since the parameter uncertainty is ignored in its construction. In the next subsection, we deal with the problem from the viewpoint of Bayesian statistics which allows to incorporate the parameter uncertainty directly into the decision process before the optimization problem is solved.

\subsection{Bayesian quantile-based optimal portfolios}\label{sec:bayesian_quantile_optimal_portfolios}

For a general quantile-based risk measure, the extension of the Alexander and Baptista optimization problem \eqref{eq:conventional_min_Q} from the Bayesian perspectives is given by
\begin{equation}\label{eq:bayesian_min_Q}
\min_{\bw:\, R_{t-1}(\bw)=R_0,\,\bw^\top\mathbf{1}=1} Q_{t-1}(\bw),
\end{equation}
where $R_{t-1}(\bw)$ and $Q_{t-1}(\bw)$ are computed by using the posterior predictive distribution as discussed in Section~\ref{sec:posterior_predictive_dist} and Section~\ref{sec:quantile_based_risk_measures}. For the special choices of the function $Q_{t-1}(\bw)$ as discussed in Section~\ref{sec:posterior_predictive_VaR_and_CVaR}, we get the optimization problems that minimize the predictive portfolio VaR and the predictive portfolio CVaR. 

The solution of the optimization problem \eqref{eq:bayesian_min_Q} can be presented in the following way
\begin{eqnarray*}
\argmin_{\bw:\, R_{t-1}(\bw)=R_0,\,\bw^\top\mathbf{1}=1} Q_{t-1}(\bw)
&=&
\argmin_{\bw:\, R_{t-1}(\bw)=R_0,\,\bw^\top\mathbf{1}=1} -R_{t-1}(\bw) + q_{\alpha}\sqrt{\frac{d_{k,n}-2}{d_{k,n}}}
\sqrt{V_{t-1}(\bw)}\\
&=&\argmin_{\bw:\, R_{t-1}(\bw)=R_0,\,\bw^\top\mathbf{1}=1} 
V_{t-1}(\bw),
\end{eqnarray*}
provided that $d_{k,n}>2$. Hence, on the one hand, all solutions of the quantile-based optimization problem \eqref{eq:bayesian_min_Q} are also the solutions of the mean-variance optimization problem \eqref{eq:bayesian_mv_problem} and belong to the efficient frontier \eqref{eq:bayesian_ef_mv}. On the other hand, all four optimization problems \eqref{eq:conventional_mv_problem}, \eqref{eq:bayesian_mv_problem}, \eqref{eq:conventional_min_Q}, and \eqref{eq:bayesian_min_Q} possess a solution only if $R_0$ is properly chosen. For example, \eqref{eq:conventional_mv_problem} and \eqref{eq:conventional_min_Q} have solutions if and only if $R_0>R_{GMV}$ and $R_0>R_{GMV,t-1}$, respectively, while for solving \eqref{eq:conventional_min_Q} one requires that
\begin{equation}\label{eq:conventional_GMQ_condition}
q_{P;\alpha}^2-s>0.    
\end{equation} Below in Theorem \ref{theorem:bayesian_GMQ}, we formulate the conditions needed for the existence of the Bayesian optimal portfolio in the sense of minimizing $Q_{t-1}(\bw)$. 

It has to be noted that the conditions of solution existence formulated in the case of the population optimization problems \eqref{eq:conventional_mv_problem} and \eqref{eq:conventional_min_Q} depend on the unknown population parameters of the data generating process and thus they cannot be validated in practice. In contrast, the Bayesian formulation of the optimization problems makes it possible to specify the existence conditions in terms of the previously observed data $\bx_{(t-1)}$ and, thus, to check them before the optimization problem is solved. Finally, the conditions on the existence of the solutions in the quantile-based optimization problems \eqref{eq:conventional_min_Q} and \eqref{eq:bayesian_min_Q} depend on the chosen confidence level $\alpha$, although the solutions themselves are independent of it.

Similarly to the mean-variance optimization problems, in order to determine under which conditions imposed on $R_0$ the solutions of the quantile-based optimization problem exist, one has to find the optimal portfolio with the smallest possible value of the objective function $Q_{t-1}(\bw)$, that is when the constraint $R_{t-1}(\bw)=R_0$ is dropped from the optimization problem \eqref{eq:bayesian_min_Q}. The expected return of this portfolio will provide the smallest possible value for which the optimization problem \eqref{eq:bayesian_min_Q} possesses a solution. To this end, we note that this is also the portfolio which a completely risk averse investor may be interested in. The following theorem expresses the variance and return of such a portfolio. 
\begin{theorem}\label{theorem:bayesian_GMQ}
Let $d_{k,n}>2$. Then, under the conditions of Theorem~\ref{theorem:predictive_posterior}, the global minimum quantile (GMQ)-based optimal portfolio exists if and only if
\begin{equation}\label{eq:bayesian_GMQ_condition}
q_{\alpha}^2>r_{k,n}^{-1}s_{t-1},
\end{equation}
where $s_{t-1}$ is defined in \eqref{eq:bayesian_slope}. Moreover, its posterior predictive expected return and variance are given by 
\begin{equation} \label{eq:return_GMVaR_GMCVaR}
R_{GMQ,t-1}=R_{GMV,t-1}+ \frac{r_{k,n}^{-1}s_{t-1}}{\sqrt{q_{\alpha}^2-r_{k,n}^{-1}s_{t-1}}}\sqrt{\frac{d_{k,n}-2}{d_{k,n}}}
\sqrt{V_{GMV,t-1}},
\end{equation}
and
\begin{equation} \label{eq:variance_GMVaR_GMCVaR}
    V_{GMQ,t-1} =  \frac{q_{\alpha}^2}{q_{\alpha}^2-r_{k,n}^{-1}s_{t-1}}V_{GMV, t-1},
\end{equation}
where $V_{GMV,t-1}$ and $R_{GMV,t-1}$ are given in \eqref{eq:bayesian_gmv_weights}.
\end{theorem}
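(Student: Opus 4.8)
The plan is to solve the reduced problem $\min_{\mathbf{1}^\top\bw=1}Q_{t-1}(\bw)$ directly (this minimiser is, by the discussion preceding the theorem, exactly the GMQ portfolio) and to read off from the solution both the existence condition and the formulas for its mean and variance. By Theorem~\ref{theorem:convexity_of_VaR_and_CVaR} the objective is convex and the constraint affine, so it suffices to exhibit a stationary point or show none exists. First I would apply the global‑minimum‑variance decomposition: write every feasible $\bw$ as $\bw=\bw_{GMV,t-1}+\bv$ with $\mathbf{1}^\top\bv=0$. Since $\bS_{t-1}\bw_{GMV,t-1}=(\mathbf{1}^\top\bS_{t-1}^{-1}\mathbf{1})^{-1}\mathbf{1}$, the cross term $\bv^\top\bS_{t-1}\bw_{GMV,t-1}$ vanishes, so that
\begin{equation*}
R_{t-1}(\bw)=R_{GMV,t-1}+\bv^\top\bar{\bx}_{t-1},\qquad
\bw^\top\bS_{t-1}\bw=\frac{1}{\mathbf{1}^\top\bS_{t-1}^{-1}\mathbf{1}}+\bv^\top\bS_{t-1}\bv,
\end{equation*}
and by \eqref{eq:VaR_CVaR_common} the problem becomes minimising $-\bv^\top\bar{\bx}_{t-1}+q_{\alpha}\sqrt{r_{k,n}}\sqrt{(\mathbf{1}^\top\bS_{t-1}^{-1}\mathbf{1})^{-1}+\bv^\top\bS_{t-1}\bv}$ over $\bv\perp\mathbf{1}$.

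Next I would collapse this to one dimension. Fixing the ``size'' $c^{2}=\bv^\top\bS_{t-1}\bv$, the Cauchy–Schwarz inequality in the $\bS_{t-1}$‑inner product restricted to $\{\bv:\mathbf{1}^\top\bv=0\}$ gives $\max\{\bv^\top\bar{\bx}_{t-1}:\mathbf{1}^\top\bv=0,\ \bv^\top\bS_{t-1}\bv=c^{2}\}=c\sqrt{s_{t-1}}$ with $s_{t-1}=\bar{\bx}_{t-1}^\top\bM_{t-1}\bar{\bx}_{t-1}$, the maximiser lying in the Markowitz direction proportional to $\bM_{t-1}\bar{\bx}_{t-1}$ (which also shows the GMQ weights have the form $\bw_{GMV,t-1}+\text{const}\cdot\bM_{t-1}\bar{\bx}_{t-1}$, in analogy with \eqref{eq:conventional_GMQ_weights}). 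Hence the objective equals $-R_{GMV,t-1}+g(c)$ with
\begin{equation*}
g(c)=-c\sqrt{s_{t-1}}+q_{\alpha}\sqrt{r_{k,n}}\sqrt{(\mathbf{1}^\top\bS_{t-1}^{-1}\mathbf{1})^{-1}+c^{2}},\qquad c\ge 0.
\end{equation*}
Since $c\mapsto c/\sqrt{(\mathbf{1}^\top\bS_{t-1}^{-1}\mathbf{1})^{-1}+c^{2}}$ is increasing to $1$, $g'(c)$ rises from $-\sqrt{s_{t-1}}<0$ to $q_{\alpha}\sqrt{r_{k,n}}-\sqrt{s_{t-1}}$, so $g$ attains its infimum on $[0,\infty)$ iff $q_{\alpha}\sqrt{r_{k,n}}>\sqrt{s_{t-1}}$, i.e.\ iff \eqref{eq:bayesian_GMQ_condition} holds; otherwise $g$ is strictly decreasing and no GMQ portfolio exists. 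Solving $g'(c^{*})=0$ gives $(c^{*})^{2}=s_{t-1}\big[(\mathbf{1}^\top\bS_{t-1}^{-1}\mathbf{1})(q_{\alpha}^{2}r_{k,n}-s_{t-1})\big]^{-1}$.

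Finally I would substitute back. At the optimum $\bw^\top\bS_{t-1}\bw=(\mathbf{1}^\top\bS_{t-1}^{-1}\mathbf{1})^{-1}+(c^{*})^{2}=\dfrac{q_{\alpha}^{2}r_{k,n}}{q_{\alpha}^{2}r_{k,n}-s_{t-1}}(\mathbf{1}^\top\bS_{t-1}^{-1}\mathbf{1})^{-1}$; multiplying by $d_{k,n}r_{k,n}/(d_{k,n}-2)$, recognising $V_{GMV,t-1}$ from \eqref{eq:bayesian_gmv_weights}, and factoring $r_{k,n}$ out of $q_{\alpha}^{2}r_{k,n}-s_{t-1}$ yields \eqref{eq:variance_GMVaR_GMCVaR}. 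For the mean, $R_{GMQ,t-1}=R_{GMV,t-1}+c^{*}\sqrt{s_{t-1}}$ (the Cauchy–Schwarz bound being attained), and rewriting $c^{*}\sqrt{s_{t-1}}$ with the identity $\sqrt{(d_{k,n}-2)/d_{k,n}}\,\sqrt{V_{GMV,t-1}}=\sqrt{r_{k,n}}\,(\mathbf{1}^\top\bS_{t-1}^{-1}\mathbf{1})^{-1/2}$ gives \eqref{eq:return_GMVaR_GMCVaR}. I expect the existence part to be the main obstacle: the objective is convex but not coercive, so the threshold \eqref{eq:bayesian_GMQ_condition} must be pinned down by carefully tracking the asymptotic slope of $g$ as the portfolio's $\bS_{t-1}$‑norm diverges; the remainder is bookkeeping to translate the $\bS_{t-1}$‑scale answers into the posterior‑predictive quantities $R_{GMV,t-1}$ and $V_{GMV,t-1}$.
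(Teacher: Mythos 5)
Your proof is correct and follows essentially the same route as the paper: both reduce the problem to a univariate minimisation along the mean-variance efficient frontier (the paper parameterises by the predictive variance $V$ and invokes the frontier equation \eqref{eq:bayesian_ef_mv} directly, while you re-derive that reduction via the GMV decomposition and Cauchy--Schwarz and parameterise by the excess $\bS_{t-1}$-norm $c$), after which the first-order condition yields \eqref{eq:variance_GMVaR_GMCVaR} and \eqref{eq:return_GMVaR_GMCVaR}. Your existence argument, tracking the asymptotic slope of $g$ as $c\to\infty$, is in fact more explicit than the paper's, which only remarks that \eqref{eq:bayesian_GMQ_condition} coincides with the second-order condition.
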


The statement of Theorem~\ref{theorem:bayesian_GMQ} is proved in the appendix. Its results determine the lower bound for possible values of $R_0$ that can be used in the optimization problem \eqref{eq:bayesian_min_Q}. Since $R_{GMQ,t-1}>R_{GMV,t-1}$, we get that the set of optimal portfolios which solve \eqref{eq:bayesian_min_Q} does not coincide with the set of the Bayesian mean-variance optimal portfolios which lie on the upper part of the efficient frontier given by the parabola \eqref{eq:bayesian_ef_mv} in the mean-variance space. 

The findings of Theorem~\ref{theorem:bayesian_GMQ} lead to the expression of the smallest possible value of $Q_{t-1}(\bw)$ for the selected confidence level $\alpha$ expressed as
\begin{equation}\label{eq:bayesian_GMQ}
Q_{GMQ,t-1}=-R_{GMQ,t-1}+q_{\alpha}
\sqrt{\frac{d_{k,n}-2}{d_{k,n}}}\sqrt{V_{GMQ,t-1}}.
\end{equation}
Finally, the weights of the global minimum quantile-based portfolio are deduced from the findings of Theorem~\ref{theorem:bayesian_GMQ} and they are presented in Theorem~\ref{prop:weights_of_global_minimum_VaR_and_CVaR}. 
\begin{theorem}\label{prop:weights_of_global_minimum_VaR_and_CVaR}
Let $d_{k,n}>2$ and the inequality \eqref{eq:bayesian_GMQ_condition} holds. Then, under the conditions of Theorem~\ref{theorem:predictive_posterior}, the weights of the global minimum quantile-based optimal portfolio are given by
\begin{equation}\label{eq:bayesian_GMQ_weights}
    \bm{w}_{GMQ,t-1} = \bm{w}_{GMV,t-1}
    +\frac{r_{k,n}^{-1}\sqrt{V_{GMV,t-1}}}{\sqrt{q_{\alpha}^2-r_{k,n}^{-1}s_{t-1}}}
    \sqrt{\frac{d_{k,n}-2}{d_{k,n}}}
    \bM_{t-1}\bar{\bx}_{t-1}.
\end{equation}
\end{theorem}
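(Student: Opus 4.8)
The plan is to treat $\bm{w}_{GMQ,t-1}$ as the minimizer of the return-unconstrained problem $\min_{\bw^\top\mathbf{1}=1} Q_{t-1}(\bw)$ with $Q_{t-1}(\bw) = -\bw^\top\bar{\bx}_{t-1} + q_\alpha\sqrt{r_{k,n}}\sqrt{\bw^\top\bS_{t-1}\bw}$ as in \eqref{eq:VaR_CVaR_common}, and to read off its solution from first-order conditions. By Theorem~\ref{theorem:convexity_of_VaR_and_CVaR} the objective is convex in $\bw$ and the feasible set is an affine hyperplane, so stationarity of the Lagrangian $L(\bw,\lambda) = Q_{t-1}(\bw) - \lambda(\bw^\top\mathbf{1}-1)$ is both necessary and sufficient for optimality. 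First I would differentiate: $\nabla_\bw L = -\bar{\bx}_{t-1} + q_\alpha\sqrt{r_{k,n}}\,\bS_{t-1}\bw/\sqrt{\bw^\top\bS_{t-1}\bw} - \lambda\mathbf{1} = \mathbf{0}$, which rearranges to $\bw = c\,\bS_{t-1}^{-1}(\bar{\bx}_{t-1}+\lambda\mathbf{1})$ with $c = \sqrt{\bw^\top\bS_{t-1}\bw}/(q_\alpha\sqrt{r_{k,n}})>0$.

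Next I would impose $\bw^\top\mathbf{1}=1$ to solve for $\lambda$ in terms of $c$, $\mathbf{1}^\top\bS_{t-1}^{-1}\mathbf{1}$ and $\mathbf{1}^\top\bS_{t-1}^{-1}\bar{\bx}_{t-1}$, and substitute back. Using $\bw_{GMV,t-1}=\bS_{t-1}^{-1}\mathbf{1}/(\mathbf{1}^\top\bS_{t-1}^{-1}\mathbf{1})$ and $R_{GMV,t-1}=\mathbf{1}^\top\bS_{t-1}^{-1}\bar{\bx}_{t-1}/(\mathbf{1}^\top\bS_{t-1}^{-1}\mathbf{1})$ from \eqref{eq:bayesian_gmv_weights}, together with the decomposition $\bS_{t-1}^{-1}\bar{\bx}_{t-1} = \bM_{t-1}\bar{\bx}_{t-1} + R_{GMV,t-1}\,\bS_{t-1}^{-1}\mathbf{1}$ (immediate from the definition of $\bM_{t-1}$ and the fact that $\mathbf{1}^\top\bM_{t-1}=\mathbf{0}$), all terms proportional to $\bS_{t-1}^{-1}\mathbf{1}$ collapse and one is left with $\bw_{GMQ,t-1} = \bw_{GMV,t-1} + c\,\bM_{t-1}\bar{\bx}_{t-1}$ for the still-undetermined scalar $c$.

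The step I expect to be the real obstacle is pinning down $c$, because it is defined self-referentially through $\bw^\top\bS_{t-1}\bw$. To close the loop I would evaluate $\bw^\top\bS_{t-1}\bw$ at $\bw = \bw_{GMV,t-1}+c\bM_{t-1}\bar{\bx}_{t-1}$: the cross term vanishes since $\mathbf{1}^\top\bM_{t-1}=\mathbf{0}$, $\bw_{GMV,t-1}^\top\bS_{t-1}\bw_{GMV,t-1}=1/(\mathbf{1}^\top\bS_{t-1}^{-1}\mathbf{1})$, and $\bM_{t-1}\bS_{t-1}\bM_{t-1}=\bM_{t-1}$ gives $(\bM_{t-1}\bar{\bx}_{t-1})^\top\bS_{t-1}(\bM_{t-1}\bar{\bx}_{t-1})=s_{t-1}$ by \eqref{eq:bayesian_slope}. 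Hence $\bw^\top\bS_{t-1}\bw = 1/(\mathbf{1}^\top\bS_{t-1}^{-1}\mathbf{1}) + c^2 s_{t-1}$, and combining with $c^2 q_\alpha^2 r_{k,n} = \bw^\top\bS_{t-1}\bw$ yields $c^2(q_\alpha^2 r_{k,n}-s_{t-1}) = 1/(\mathbf{1}^\top\bS_{t-1}^{-1}\mathbf{1})$; this has a positive root exactly under \eqref{eq:bayesian_GMQ_condition}, and rewriting $1/(\mathbf{1}^\top\bS_{t-1}^{-1}\mathbf{1})$ via $V_{GMV,t-1}$ from \eqref{eq:bayesian_gmv_weights} and factoring $r_{k,n}$ out of $q_\alpha^2 r_{k,n}-s_{t-1}$ turns $c$ into precisely the coefficient in \eqref{eq:bayesian_GMQ_weights}.

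Finally, it is worth recording a shortcut that gives the cleanest write-up given what is already available: since it was established just before Theorem~\ref{theorem:bayesian_GMQ} that every solution of the quantile problem \eqref{eq:bayesian_min_Q} also solves the Bayesian mean-variance problem \eqref{eq:bayesian_mv_problem}, the portfolio $\bw_{GMQ,t-1}$ must coincide with the mean-variance weights \eqref{eq:bayesianl_mv_solution} evaluated at $R_0=R_{GMQ,t-1}$; plugging $R_{GMQ,t-1}$ from \eqref{eq:return_GMVaR_GMCVaR} into $(R_0-R_{GMV,t-1})/s_{t-1}$ reproduces the coefficient in \eqref{eq:bayesian_GMQ_weights} in a single line, and the condition \eqref{eq:bayesian_GMQ_condition} is inherited from Theorem~\ref{theorem:bayesian_GMQ}. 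I would present this route in the paper and relegate the direct Lagrangian computation above to a verification.
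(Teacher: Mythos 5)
Your proposal is correct, and the shortcut you advocate presenting — identifying $\bw_{GMQ,t-1}$ with the mean-variance solution \eqref{eq:bayesianl_mv_solution} at $R_0=R_{GMQ,t-1}$ and substituting \eqref{eq:return_GMVaR_GMCVaR} — is exactly how the paper obtains the result (it gives no separate proof, stating only that the weights are ``deduced from the findings of Theorem~\ref{theorem:bayesian_GMQ}''). Your direct Lagrangian derivation is also sound: the cross-term cancellation via $\mathbf{1}^\top\bM_{t-1}=\mathbf{0}$, the identity $\bM_{t-1}\bS_{t-1}\bM_{t-1}=\bM_{t-1}$, and the resulting equation $c^2\left(q_\alpha^2 r_{k,n}-s_{t-1}\right)=1/\left(\mathbf{1}^\top\bS_{t-1}^{-1}\mathbf{1}\right)$ all check out and reproduce the coefficient in \eqref{eq:bayesian_GMQ_weights} precisely under condition \eqref{eq:bayesian_GMQ_condition}.
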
 

\subsection{Bayesian efficient frontier in mean-quantile space}\label{sec:bayesian_efficient_frontier}
Earlier in this section we proved that the solutions of the quantile-based portfolio optimization problem \eqref{eq:bayesian_min_Q} belong the Bayesian efficient frontier \eqref{eq:bayesian_ef_mv} in the mean-variance space. We now characterise the location of the Bayesian quantile-based optimal portfolio in the mean-quantile (mean-Q) space. It has to be noted that the population mean-VaR efficient frontier was carried out by \cite{alexander2002economic} under the assumption that the asset returns are multivariate normally distributed. We extend these findings in Theorem~\ref{theorem:bayesian_mean_Q_ef} whose proof is given in the appendix.

\begin{theorem}\label{theorem:bayesian_mean_Q_ef}
Let $d_{k,n}>2$ and $s_{t-1}>0$. Then, under the conditions of Theorem~\ref{theorem:predictive_posterior}, the Bayesian efficient frontier in the mean-Q space is a hyperbola given by
\begin{equation}\label{eq:bayesian_mean_Q_ef}
Q=q_{\alpha}\sqrt{\frac{(R-R_{GMV,t-1})^2}{r_{k,n}^{-1}s_{t-1}}+\frac{d_{k,n}-2}{d_{k,n}}V_{GMV,t-1}}-R.
\end{equation}
\end{theorem}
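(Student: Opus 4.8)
The plan is to transport the Bayesian mean-variance efficient frontier \eqref{eq:bayesian_ef_mv} through the coordinate change $V\mapsto Q$, using the reduction established in Section~\ref{sec:bayesian_quantile_optimal_portfolios}. Fix an attainable expected return $R$. On the feasible set $\{\bw:R_{t-1}(\bw)=R,\ \bw^\top\mathbf{1}=1\}$ one has, by \eqref{eq:VaR_CVaR_common} and \eqref{eq:stochastic_rep_expectation_and_variance} together with $d_{k,n}>2$,
\[
Q_{t-1}(\bw)=-R+q_{\alpha}\sqrt{\tfrac{d_{k,n}-2}{d_{k,n}}}\,\sqrt{V_{t-1}(\bw)},
\]
and since $q_{\alpha}>0$ the portfolio on this set with the smallest $Q_{t-1}$ is the one with the smallest $V_{t-1}$ — namely the Bayesian mean-variance efficient portfolio at level $R$. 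Its variance $V(R)$ is read off by solving the parabola \eqref{eq:bayesian_ef_mv} for $V$:
\[
V(R)=V_{GMV,t-1}+\frac{d_{k,n}\,r_{k,n}}{(d_{k,n}-2)\,s_{t-1}}\,(R-R_{GMV,t-1})^2 .
\]

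Next I would substitute $V(R)$ into the displayed formula for $Q_{t-1}$ and cancel the prefactor $\tfrac{d_{k,n}-2}{d_{k,n}}$ against the bracketed term; the only computation involved is
\[
\frac{d_{k,n}-2}{d_{k,n}}\cdot\frac{d_{k,n}\,r_{k,n}}{(d_{k,n}-2)\,s_{t-1}}=\frac{r_{k,n}}{s_{t-1}}=\frac{1}{r_{k,n}^{-1}s_{t-1}},
\]
which converts the return-dependent term into $(R-R_{GMV,t-1})^2/(r_{k,n}^{-1}s_{t-1})$ and produces exactly \eqref{eq:bayesian_mean_Q_ef}. Thus the lower boundary of the attainable mean-Q region, traced as $R$ varies over the feasible range, is the stated curve.

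Finally, to justify calling it a hyperbola I would square \eqref{eq:bayesian_mean_Q_ef} into $(Q+R)^2=q_{\alpha}^2[(R-R_{GMV,t-1})^2/(r_{k,n}^{-1}s_{t-1})+\tfrac{d_{k,n}-2}{d_{k,n}}V_{GMV,t-1}]$ and pass to the invertible affine coordinates $u=Q+R$, $v=R-R_{GMV,t-1}$, in which it reads $u^2-\tfrac{q_{\alpha}^2 r_{k,n}}{s_{t-1}}\,v^2=q_{\alpha}^2\tfrac{d_{k,n}-2}{d_{k,n}}V_{GMV,t-1}$ — a non-degenerate hyperbola precisely because $s_{t-1}>0$, $q_{\alpha}>0$ and $V_{GMV,t-1}>0$. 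I do not expect a substantive obstacle; the points needing care are the equivalence ``$\min Q_{t-1}\Leftrightarrow\min V_{t-1}$ at fixed $R$'' (which uses $q_{\alpha}>0$ and $d_{k,n}>2$) and the non-degeneracy of the conic (which is why the hypothesis $s_{t-1}>0$ is imposed), together with the remark that the frontier is the branch given by the positive square root, and that — for genuine Pareto-efficiency — only its sub-arc $R\ge R_{GMQ,t-1}$, with $R_{GMQ,t-1}$ from Theorem~\ref{theorem:bayesian_GMQ}, consists of efficient portfolios.
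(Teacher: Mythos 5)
Your proposal is correct and follows essentially the same route as the paper: both combine the mean--variance parabola \eqref{eq:bayesian_ef_mv} with the relation $Q_{t-1}=-R+q_{\alpha}\sqrt{(d_{k,n}-2)/d_{k,n}}\sqrt{V_{t-1}}$ (justified by $q_{\alpha}>0$, $d_{k,n}>2$) and substitute $V(R)$ to obtain \eqref{eq:bayesian_mean_Q_ef}. The only cosmetic difference is in verifying the hyperbola claim --- you use an invertible affine change of coordinates $u=Q+R$, $v=R-R_{GMV,t-1}$, while the paper expands the squared equation into a general conic and checks the sign of its discriminant; both hinge on $s_{t-1}>0$ and $V_{GMV,t-1}>0$ and are equally valid.
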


Expressions for the mean-variance efficient frontier using the Bayesian setup was derived in \cite{bauder2018bayesian}. It holds that the mean-variance efficient frontier \eqref{eq:bayesian_ef_mv} is a parabola in the mean-variance space and a hyperbola in the mean-standard deviation space for $s_{t-1}>0$. These findings are in line with the results in \cite{merton1972}, where the same conclusions were drawn for the population efficient frontier. In Theorem~\ref{theorem:bayesian_mean_Q_ef}, we prove that the efficient frontier in the mean-Q space is also a hyperbola under the same condition $s_{t-1}>0$. It is interesting to note that since $\bM_{t-1}$ is positive semi-definite with $\bM_{t-1}\bOne=\mathbf{0}$ by construction, it always holds that $s_{t-1}\ge0$ with $s_{t-1}=0$ only if the elements of the vector $\bar{\bx}_{t-1}$ are all equal. Another important observation is that both efficient frontiers \eqref{eq:bayesian_ef_mv} and \eqref{eq:bayesian_mean_Q_ef} are determined by the same set of quantities $R_{GMV,t-1}$, $V_{GMV,t-1}$, and $s_{t-1}$ which are computed from the historical data of asset returns.

\begin{remark}
Using the proof of Theorem~\ref{theorem:bayesian_mean_Q_ef}, we also obtain the analytical expression of the population efficient frontier in the mean-Q space, thus complementing the findings of \cite{alexander2002economic} who presents this frontier in the empirical study without deriving its closed-form expression. It holds that the population efficient frontier in the mean-Q space is a hyperbola expressed as
\begin{equation}\label{eq:conventional_mean_Q_ef}
Q=q_{P; \alpha}\sqrt{\frac{(R-R_{GMV})^2}{s}+V_{GMV}}-R.
\end{equation}
It is fully determined by the same set of constants $R_{GMV}$, $V_{GMV}$, and $s$ as the population efficient frontier \eqref{eq:conventional_mv_ef}, which is also a hyperbola in the mean standard-deviation space. 
\end{remark}

\section{Simulation study}\label{sec:simulation}
In the following section, we analyse how the Bayesian approaches compare to the conventional method via simulations. We will do so by studying VaR prediction using the global minimum VaR (GMVaR) portfolio and by looking at how frequently the conditions \eqref{eq:conventional_GMQ_condition} and \eqref{eq:bayesian_GMQ_condition} are satisfied. The comparison for other quantile-based risk measures can be done similarly by emphasising that the coherent risk measures have the same structure as given in \eqref{gen_risk_measure}. Utilizing \eqref{gen_risk_measure} it is interesting to note that any coherent risk measure can be rewritten as VaR at confidence level $\beta=F_{d_{k,n}}(\rho_{t-1}(\tau))$ for the Bayesian approaches, where  $F_{d_{k,n}}(\cdot)$ stands for the cumulative distribution function of the univariate $t$-distribution with $d_{k,n}$ degrees of freedom and $\tau$ is a $t$-distributed random variable with $d_{k,n}$ degrees of freedom. For the conventional method, one can use the same procedure, where the $t$-distribution is replaced by the standard normal distribution. Finally, we compare the different estimation methods by illustrating their corresponding efficient frontiers. When doing this comparison we also include the global minimum variance (GMV) portfolio in the analysis to see where it is located in the mean-VaR space.

\subsection{Setup of simulation study}
Throughout the simulation study, the asset returns are generated from a multivariate normal distribution. This distribution satisfies the assumptions of infinite exchangeability and multivariate centered spherical symmetry when conditioning on the parameters \citep[see, e.g., Proposition 4.6 in][]{bernardo2009bayesian}. In order to not restrict the analysis to certain parameters, the mean vector and covariance matrix are randomized in each new simulation iteration. We draw $\bm{\mu}$ from the uniform distribution on $[-0.003, 0.005]$, i.e., $\mu_i\sim U(-0.003, 0.005)$, and the covariance matrix is constructed by writing it as $\bm{\Sigma}=\mathbf{D}\mathbf{R}\mathbf{D}$ where $\mathbf{R}$ is a correlation matrix with $(\mathbf{R})_{ij}=0.3$ if $i \ne j$ and $\mathbf{D}$ is a diagonal matrix with entries given by $(\mathbf{D})_{ii} \sim U(0.03,0.04)$. We also consider different sample sizes and portfolio sizes by using $n\in\{100, 200\}$ and $k = cn$ for $c \in\{0.1, 0.3, 0.5, 0.7\}$. Moreover, we use $\alpha \in \{0.95, 0.99\}$ to study the impact of the confidence level in the GMVaR computations. For each parameter setup, we consider 10000 independent simulation runs when studying the performance and existence of the GMVaR portfolios. In each such simulation iteration, the out-of-sample performance of the portfolios is evaluated for one period ahead. We then aggregate the obtained results in all simulation runs. To this end, only those results where the conventional and Bayesian GMVaR conditions \eqref{eq:conventional_GMQ_condition} and \eqref{eq:bayesian_GMQ_condition} are satisfied simultaneously are considered.

Since the true parameters of the asset return distribution are known during simulation, it is possible to make comparisons with the population GMVaR portfolios as well as the population efficient frontier. The population GMVaR portfolios are constructed from the same mathematical formulas as when using the conventional method but they are based on the true parameters. Hence the population portfolios can be used as benchmarks for the corresponding Bayesian and conventional portfolios which are all based on parameter estimates. Similarly, the population efficient frontier can be used as a reference for the estimated efficient frontiers. Finally, the hyperparameters $\bm{m_0}$ and $\bm{S_0}$ of the conjugate prior are determined by using the empirical Bayesian approach \citep[see, e.g.,][]{bauder2020bayesian} where we set $d_0=r_0=n$.

\subsection{Existence and performance}\label{sec:sim_GMVaRP}
The GMVaR existence conditions \eqref{eq:conventional_GMQ_condition} and \eqref{eq:bayesian_GMQ_condition} are not satisfied in several simulation runs when the portfolio dimension becomes large in comparison to the sample size. More precisely, in the following cases $\{n = 100, k=50, \alpha=0.95 \}$, $\{n = 100, k=70, \alpha=0.95 \}$, $\{n = 200, k=100, \alpha=0.95 \}$, $\{n = 200, k=140, \alpha=0.95\}$, $\{n = 100, k=70, \alpha=0.99\}$ and $\{n = 200, k=140, \alpha=0.99\}$, the conventional condition \eqref{eq:conventional_GMQ_condition} is not met for 247, 8432, 1486, 9995, 1027 and 4155 out of the 10000 simulation runs, respectively. For the Jeffreys prior, the corresponding numbers are 0, 16, 0, 51, 0 and 0, and they are 4, 581, 4, 2694, 1 and 0 for the conjugate prior. For all other values of $\{n,k,\alpha\}$ the existence conditions are satisfied. They are always fulfilled for the population GMVaR portfolio. Based on these findings we conclude that both Bayesian GMVaR portfolios are more likely to exist than its conventional counterpart with the Bayeisan GMVaR portfolio under the Jeffreys prior demonstrating the lowest frequencies of non-existence in all of the considered cases. Our results also show that it is more likely that the GMVaR conditions are not satisfied when $\alpha$ is small or when $c$ is large. Finally, we point out that when $\{n = 200, k=140, \alpha=0.95\}$ the conventional GMVaR portfolio does not exist in 9995 out of 10000 simulation iterations. For this reason, the values of the performance measures for this configuration are not presented in Tables \ref{tab:simulation_GMVaRP_VaR_exceedance} and \ref{tab:simulation_GMVaRP_pop_VaR_deviation}.

We use two measures to analyze the performance of the GMVaR portfolios. The first measure is the relative frequency of times the estimated VaR is exceeded, i.e,
$$
    \frac{1}{N} \sum_{i=1}^N \mathbf{1}\{
        -X_{\text{GMVaRP}, i} \geq \widehat{\text{VaR}}_{\alpha}(X_{\text{GMVaRP}, i}) 
    \},
$$
where $N$ is the number of simulations, $\mathbf{1}$ is the indicator function, $X_{\text{GMVaRP}, i}$ is the actual return of the estimated GMVaR portfolio for simulation $i$ and $\widehat{\text{VaR}}_{\alpha}(X_{\text{GMVaRP}, i})$ is its predicted VaR. The latter two are calculated using equations \eqref{eq:bayesian_GMQ} and \eqref{eq:bayesian_GMQ_weights} in the Bayesian cases and \eqref{eq:VaR_CVaR_common_pop} and \eqref{eq:conventional_GMQ_weights} in the conventional and population cases. By the definition of VaR, an exceedance rate close to $1-\alpha$ means a good prediction of the VaR.  

In Table~\ref{tab:simulation_GMVaRP_VaR_exceedance} we observe that the relative VaR exceedance of the population GMVaR portfolios are always close to the target confidence level. The only source of noise is from the number of simulation runs. Such results do not hold for the three estimated GMVaR portfolios. The relative exceedance frequencies are close to the target confidence level when the portfolio dimension is small with respect to the sample size. For other values of $k$ and $n$, the predicted VaRs underestimate the true values. These results are in line with recent findings in portfolio theory, i.e., that the sample optimal portfolios are overoptimistic and tend to underestimate the risk. To this end we note that although the Bayesian approaches underestimate the risk, they still perform considerably better than the conventional approach. Especially, when $k/n \ge 0.5$ the relative exceedance rate for the conventional approach is almost twice as large as the ones obtained for the Bayesian methods when $\alpha=0.95$ and it is almost three times larger for $\alpha=0.99$. 

\begin{singlespace}
\begin{table}[H]
\small
\begin{mdframed}[backgroundcolor=black!10,rightline=false,leftline=false]
\centering
\caption{Relative VaR exceedance frequencies for the population GMVaR portfolio and its three estimates}
\label{tab:simulation_GMVaRP_VaR_exceedance}
{\begin{tabular}{| c | c | c | c | c |c | c |}
  \hline
  \multicolumn{3}{|c|}{Parameter setup}
  & \multicolumn{4}{c|}{GMVaR portfolio} \\
  \hline
  $\alpha$ & $n$ & $k$
   & Jeffreys & Conjugate & Conventional & Population\\
  \hhline{|=|=|=|=|=|=|=|}
    \multirow{8}{*}{$0.95$} & 
    \multirow{4}{*}{$100$} &
    $10$ & 0.0663 & 0.0741 & 0.0793 & 0.0510 \\ \cline{3-7}
    & & 
    $30$ & 0.1101 & 0.1375 & 0.1688 & 0.0510 \\ \cline{3-7}
    & & 
    $50$ & 0.1676 & 0.2237 & 0.3001 & 0.0490 \\ \cline{3-7}
    & &
    $70$ & 0.1996 & 0.2691 & 0.3967 & 0.0510 \\
  \cline{2-7}
    &\multirow{4}{*}{$200$} & 
    $20$ & 0.0640 & 0.0694 & 0.0760 & 0.0471 \\ \cline{3-7}
    & &
    $60$ & 0.1114 & 0.1389 & 0.1682 & 0.0490 \\ \cline{3-7}
    & &
    $100$ & 0.1649 & 0.2199 & 0.2984 & 0.0525 \\ \cline{3-7}
    & &
    $140$ & --- & --- & --- & --- \\
  \hline
   \multirow{8}{*}{$0.99$} & 
    \multirow{4}{*}{$100$} &
    $10$ & 0.0154 & 0.0190 & 0.0216 & 0.0114 \\ \cline{3-7}
    & & 
    $30$ & 0.0348 & 0.0510 & 0.0720 & 0.0095 \\ \cline{3-7}
    & & 
    $50$ & 0.0663 & 0.1049 & 0.1755 & 0.0098 \\ \cline{3-7}
    & &
    $70$ & 0.1160 & 0.1909 & 0.3300 & 0.0110\\
  \cline{2-7}
    &\multirow{4}{*}{$200$} & 
    $20$ & 0.0135 & 0.0158 & 0.0193 & 0.0091 \\ \cline{3-7}
    & &
    $60$ & 0.0330 & 0.0475 & 0.0667 & 0.0101 \\ \cline{3-7}
    & &
    $100$ & 0.0676 & 0.1074 & 0.1723 & 0.0115 \\ \cline{3-7}
    & &
    $140$ & 0.1158 & 0.1814 & 0.3179 & 0.0091 \\
    \hline
\end{tabular}}{} 
\end{mdframed}
\end{table}
\end{singlespace}

The relative VaR exceedance frequency measures how accurate the GMVaR estimates are by its definition. It does not measure how far away the estimated GMVaR is from the population GMVaR. To investigate this behaviour of our predictions we use a second measure which is the average absolute deviation of the estimated GMVaR to its population value, i.e.,
$$
    \frac{1}{N} \sum_{i=1}^N |
        \widehat{\text{VaR}}_{\alpha}(X_{\text{GMVaRP}, i}) - \text{VaR}_{\alpha}(X_{\text{GMVaRP}, i})
    |,
$$
where $\text{VaR}_{\alpha}(X_{\text{GMVaRP}, i})$ is the VaR of the population GMVaR portfolio. Since the population GMVaR portfolio is based on the true parameter values, the VaR of the population GMVaR portfolio coincides with true value of VaR. Hence, it can be used as a benchmark and the average absolute deviation should ideally be close to zero. 

Table~\ref{tab:simulation_GMVaRP_pop_VaR_deviation} shows the results of the GMVaR portfolio comparison using the average absolute deviation as a performance measure. Like in Table~\ref{tab:simulation_GMVaRP_VaR_exceedance}, we observe the same performance when the portfolio size is considerably smaller than the sample size. The Bayesian approach based on the Jeffreys prior shows the smallest deviations although the different portfolios are very close in their performance. For larger portfolio sizes, the Bayesian methods are significantly better than the conventional approach. They possess smaller values of the performance criteria as well as the computed standard deviations are smaller than those obtained for the conventional procedure. The differences become very large in the extreme case when $k/n=0.7$.  

\begin{singlespace} 
\begin{table}[H]
\small
\begin{mdframed}[backgroundcolor=black!10,rightline=false,leftline=false]
\centering
\caption{Average absolute deviation of the VaR of the estimated GMVaR portfolios to the VaR of the population GMVaR portfolio. Values inside the brackets represent the standard deviations.}
\label{tab:simulation_GMVaRP_pop_VaR_deviation}
{\begin{tabular}{| c | c | c | c | c |c |}
  \hline
  \multicolumn{3}{|c|}{Parameter setup}
  & \multicolumn{3}{c|}{GMVaR portfolio} \\
  \hline
  $\alpha$ & $n$ & $k$
   & Jeffreys & Conjugate & Conventional \\
  \hhline{|=|=|=|=|=|=|}
    \multirow{8}{*}{$0.95$} & 
    \multirow{4}{*}{$100$} &
    $10$ & 0.0027 (0.0021) & 0.0029 (0.0022) & 0.0032 (0.0024) \\ \cline{3-6}
    & & 
    $30$ & 0.0030 (0.0023) & 0.0048 (0.0030) & 0.0076 (0.0033)  \\ \cline{3-6}
    & & 
    $50$ & 0.0038 (0.0028) & 0.0079 (0.0038)  & 0.0148 (0.0042) \\ \cline{3-6}
    & &
    $70$ & 0.0038 (0.0029) &0.0084 (0.0039) & 0.0213 (0.0039)\\
  \cline{2-6}
    &\multirow{4}{*}{$200$} & 
    $20$ & 0.0019 (0.0014) & 0.0021 (0.0016) & 0.0026 (0.0018) \\ \cline{3-6}
    & &
    $60$ & 0.0023 (0.0017) & 0.0045 (0.0023)  & 0.0075 (0.0024) \\ \cline{3-6}
    & &
    $100$ & 0.0030 (0.0022) & 0.0078 (0.0028) & 0.0157 (0.0035) \\ \cline{3-6}
    & &
    $140$ & --- & --- & ---  \\
  \hline
   \multirow{8}{*}{$0.99$} & 
    \multirow{4}{*}{$100$} &
    $10$ & 0.0035 (0.0027) & 0.0035 (0.0027) & 0.0041 (0.0029) \\ \cline{3-6}
    & & 
    $30$ & 0.0035 (0.0027) & 0.0051 (0.0034) & 0.0088 (0.0038) \\ \cline{3-6}
    & & 
    $50$ & 0.0040 (0.0031) & 0.0080 (0.0043) & 0.0161 (0.0041) \\ \cline{3-6}
    & &
    $70$ & 0.0052 (0.0039) & 0.0110 (0.0053) & 0.0266 (0.0054) \\
  \cline{2-6}
    &\multirow{4}{*}{$200$} & 
    $20$ & 0.0023 (0.0017) & 0.0025 (0.0018) & 0.0032 (0.0022)  \\ \cline{3-6}
    & &
    $60$ & 0.0024 (0.0018) & 0.0047 (0.0026) & 0.0085 (0.0027)\\ \cline{3-6}
    & &
    $100$ & 0.0030 (0.0022) & 0.0079 (0.0032) & 0.0158 (0.0031) \\ \cline{3-6}
    & &
    $140$ & 0.0034 (0.0025) & 0.0103 (0.0036) & 0.0266 (0.0040) \\
    \hline
\end{tabular}}{}
\end{mdframed}
\end{table}
\end{singlespace}

To conclude, the suggested Bayesian methods perform better at estimating the GMVaR than the conventional approach. Using the Jeffreys prior seems to give the best results, although using the conjugate prior is also beneficial compared to the conventional method. However, all of the methods are underestimating the GMVaR, especially when $c$ is large. This is indicated by VaR exceedance frequencies much higher than $1-\alpha$ and large deviations to the population GMVaR. However, even if none of the methods perform very well for large-dimensional portfolios, this is the situation where we see the greatest benefit of using the Bayesian approaches. This point is further studied in the next section where we investigate the influence of parameter uncertainty on the estimation of the whole mean-VaR efficient frontier.

\subsection{Comparison of efficient frontiers}\label{sec:comparison_efficient_frontiers_sim}
In order to get a better understanding of the impact of parameter uncertainty on the quantile-based portfolio selection, we use the theoretical findings of Section~\ref{sec:bayesian_efficient_frontier} and plot the population mean-VaR efficient frontier together with its three estimates in Figures \ref{fig:simulation_mean_VaR_ef_c} to \ref{fig:simulation_mean_VaR_ef_n}. The estimates of the mean-VaR efficient frontier are computed for a single simulation run as described at the beginning of this section by using \eqref{eq:bayesian_mean_Q_ef} and \eqref{eq:conventional_mean_Q_ef} for the Bayesian and conventional estimates, respectively. It should be noted that the figures present the most common results which are also observed for other simulation runs. All of the figures also show where the portfolio which globally minimizes the variance is located in the mean-VaR space using each of the methods.

\begin{figure}[H] 
\begin{mdframed}[backgroundcolor=black!10,rightline=false,leftline=false]
  \begin{subfigure}[b]{0.5\linewidth}
    \centering
    \includegraphics[width=0.95\linewidth]{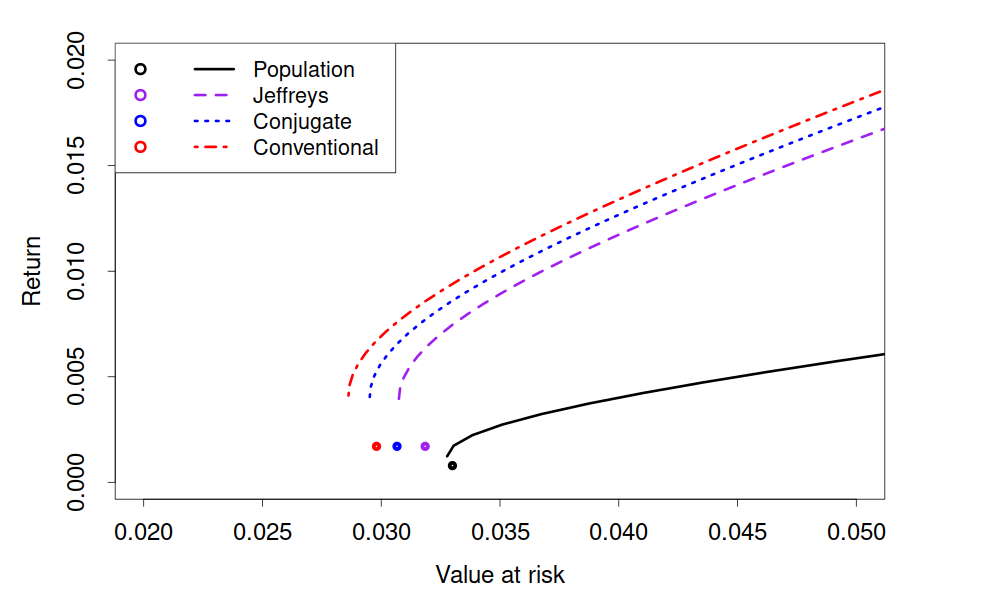} 
    \caption{$c = 0.1$} 
  \end{subfigure}
  \begin{subfigure}[b]{0.5\linewidth}
    \centering
    \includegraphics[width=0.95\linewidth]{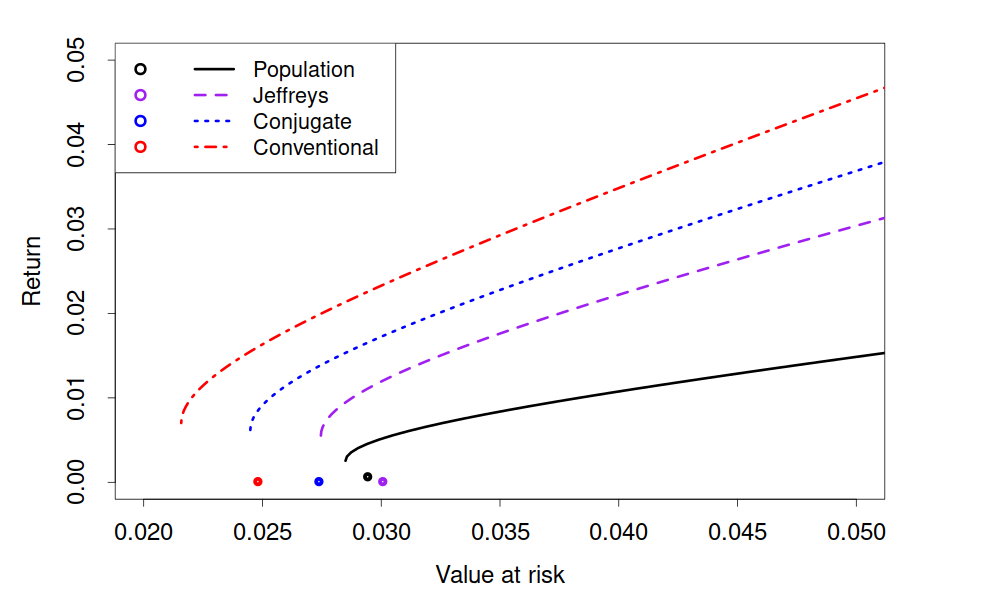}  
    \caption{$c = 0.3$} 
  \end{subfigure} 
  \begin{subfigure}[b]{0.5\linewidth}
    \centering
    \includegraphics[width=0.95\linewidth]{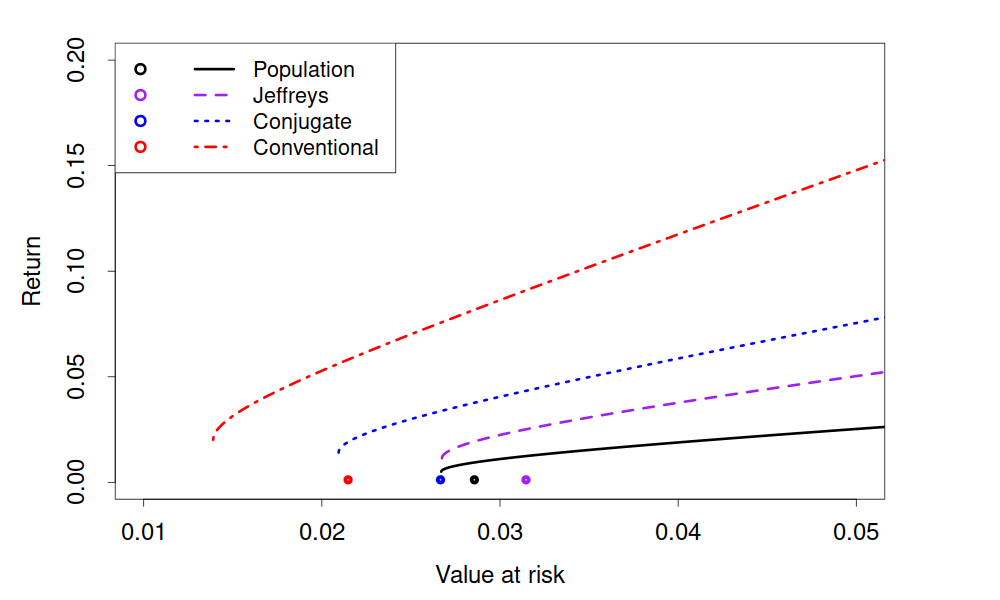}
    \caption{$c = 0.5$} 
  \end{subfigure}
  \begin{subfigure}[b]{0.5\linewidth}
    \centering
    \includegraphics[width=0.95\linewidth]{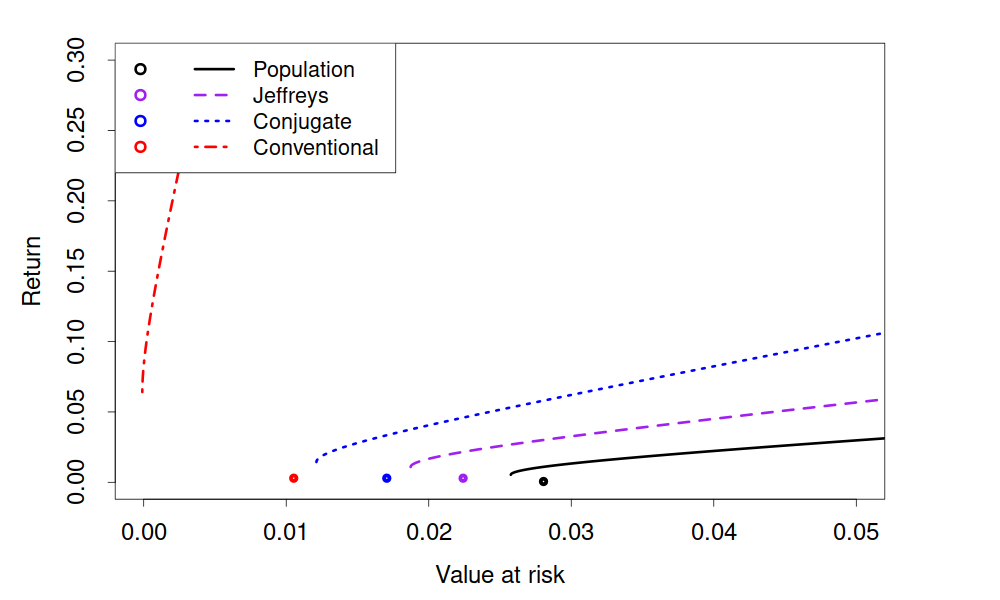}
    \caption{$c = 0.7$} 
  \end{subfigure}
  \caption{Population mean-VaR efficient frontier together with its three estimates for $n=100$, $\alpha = 0.95$ and $c\in\{0.1, 0.3, 0.5, 0.7\}$. The locations of the GMV portfolios are marked by circles. Different scales are used on the axes for presentation purposes.}
 \label{fig:simulation_mean_VaR_ef_c}
\end{mdframed}
\end{figure}

The mean-VaR efficient frontiers and the locations of the GMV portfolios are depicted in Figure \ref{fig:simulation_mean_VaR_ef_c} for $\alpha=0.95$, $n=100$, and $c \in \{0.1,0.3,0.5,0.7\}$. We observe that all methods overestimate the location of the true efficient frontier in the mean-VaR space. Such a behaviour is similar to the one previously documented for the Markowitz efficient frontier in the mean-variance space by \cite{broadie1993computing}, \cite{siegel2007performance}, \cite{bodnar2010unbiased}, \cite{bauder2019bayesian} among others. Namely, ignoring the parameter uncertainty leads to overoptimistic investment opportunities where the investors expect more return for the same level of risk than the population efficient frontier determines. The situation becomes even worse when the conventional mean-VaR frontier is constructed for $c=0.5$ and especially for $c=0.7$. The conventional efficient frontier deviates drastically from the population efficient frontier. Among the two Bayesian efficient frontiers, the one based on the Jeffreys prior leads to the curves that are closest to the population frontier for all considered portfolio sizes. We also observe the positive effect of portfolio diversification in Figure \ref{fig:simulation_mean_VaR_ef_c}. Increasing the portfolio dimension leads to the reduction of the VaR of the GMVaR portfolio. Also, we note the positive effect on the slope parameter of the efficient frontier which becomes larger.

Figure \ref{fig:simulation_mean_VaR_ef_c} also illustrates that the portfolios that minimize the variance are not located on the mean-VaR efficient frontiers. This is an expected but important observation which illustrates that an investor who is mean-variance efficient may not always be mean-VaR efficient.

\begin{figure}[H]
\begin{mdframed}[backgroundcolor=black!10,rightline=false,leftline=false]
  \begin{subfigure}[b]{0.5\linewidth}
    \centering
    \includegraphics[width=0.95\linewidth]{Figures/simulation_mean_VaR_ef_n100_k50_alpha095.png} 
    \caption{$\alpha = 0.95$} 
  \end{subfigure}
  \begin{subfigure}[b]{0.5\linewidth}
    \centering
    \includegraphics[width=0.95\linewidth]{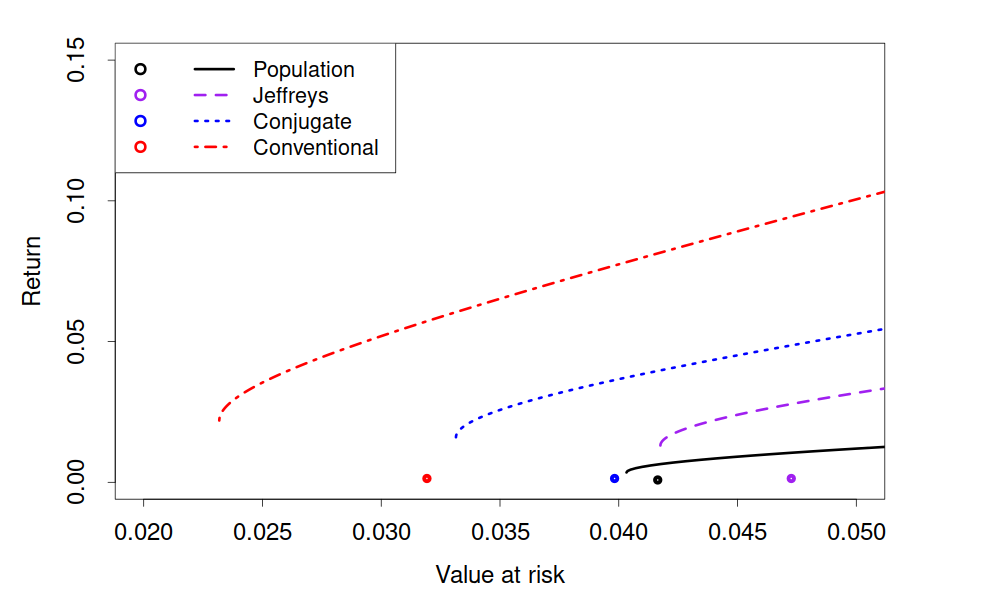}  
    \caption{$\alpha = 0.99$} 
  \end{subfigure}
  \caption{Population mean-VaR efficient frontier together with its three estimates for $n=100$, $c=0.5$ and $\alpha\in\{0.95, 0.99\}$. The locations of the GMV portfolios are marked by circles. Different scales are used on the axes for presentation purposes.}
  \label{fig:simulation_mean_VaR_ef_alpha}
 \end{mdframed}
\end{figure}

\begin{figure}[H]
\begin{mdframed}[backgroundcolor=black!10,rightline=false,leftline=false]
  \begin{subfigure}[b]{0.5\linewidth}
    \centering
    \includegraphics[width=0.95\linewidth]{Figures/simulation_mean_VaR_ef_n100_k50_alpha095.png} 
    \caption{$n = 100$} 
  \end{subfigure}
  \begin{subfigure}[b]{0.5\linewidth}
    \centering
    \includegraphics[width=0.95\linewidth]{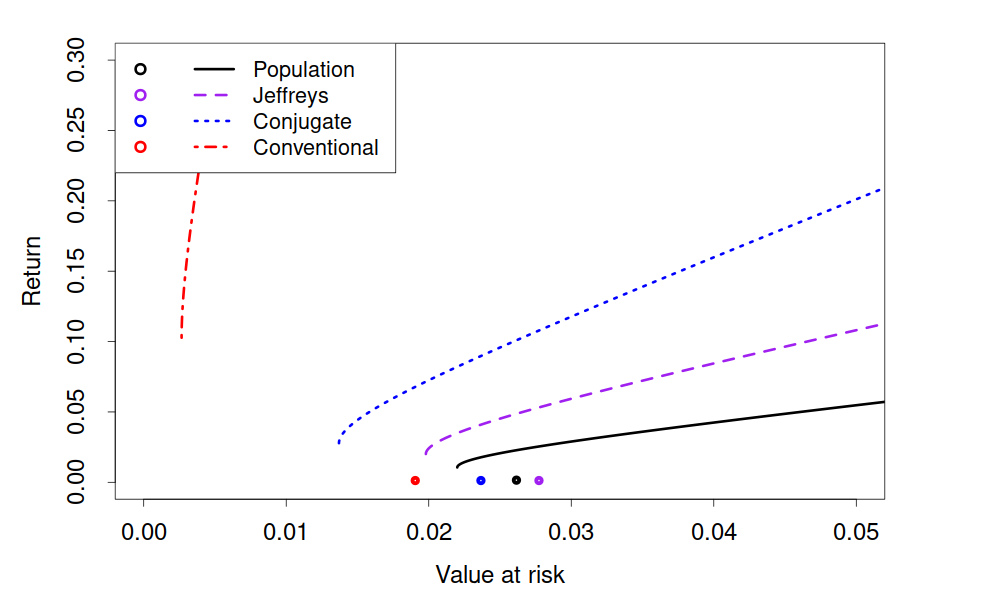}  
    \caption{$n = 200$} 
  \end{subfigure}
  \caption{Population mean-VaR efficient frontier together with its three estimates for $c=0.5$, $\alpha = 0.95$ and $n\in\{100, 200\}$. The locations of the GMV portfolios are marked by circles. Different scales are used on the axes for presentation purposes.}
  \label{fig:simulation_mean_VaR_ef_n}
 \end{mdframed}
\end{figure}

 Figure \ref{fig:simulation_mean_VaR_ef_alpha} and \ref{fig:simulation_mean_VaR_ef_n} demonstrate that the conclusions drawn from the results of Figure \ref{fig:simulation_mean_VaR_ef_c} are also valid for other values of $\alpha$ and $n$. In both figures the Bayesian approach with Jeffreys prior provides the best fit of the population efficient frontier followed by the Bayesian estimate based on the conjugate prior. Also, we observe that the increase of the portfolio dimension with the simultaneous increase of the sample size leads the reduction of the VaR of the GMVaR portfolio and to the increase in the slope parameter of the efficient frontier. Moreover, the GMV portfolios are again shown to not be mean-VaR efficient.
 

\section{Empirical illustration}\label{sec:empirical}
We now continue the comparison between the Bayesian and conventional methodologies through an application on actual market data. As in the simulation study, we study the performance and existence of the GMVaR portfolio and investigate the behaviour of their efficient frontiers. Once again, we consider the cases $n \in \{100, 200\}$, $k = cn$ for $c \in \{0.1, 0.3, 0.5, 0.7\}$ and $\alpha \in \{0.95, 0.99\}$. 

\subsection{Data description}\label{sec:data_emp}
We use weekly returns on stocks included in the S\&P 500 index for the period from the 1st of January, 2010 to the 28th of March, 2020.
In order to circumvent the possible bias of selecting stocks which outperform or underperform the rest of the market, we consider all stocks included in the S\&P 500 index by our end date that were already part of the the index by our chosen start date. The lack of public information makes it difficult to know exactly when a certain stock was added to this index, but based on \cite{wiki:SP500} we have chosen to consider 221 stocks that were present in the index before the 1st of January, 2010. A complete list of the stocks is provided in Table~\ref{table:stocks} in  Appendix~\ref{appendix:stocks}. 

In the empirical analysis, we randomly choose 500 portfolios of size $k$ from the list of stocks for each possible value of $\{n,k,\alpha\}$. Once the stocks have been selected, they are kept for the whole time period but the weights of the GMVaR portfolio are re-calculated each week and the performance is evaluated on a weekly basis and then averaged across all sampled portfolios. It should be noted that the performance results are only based on portfolios which satisfy the conventional and Bayesian GMVaR conditions \eqref{eq:conventional_GMQ_condition} and \eqref{eq:bayesian_GMQ_condition}, respectively, for all estimates of the GMVaR portfolio simultaneously. Finally, the hyperparameters  when using the conjugate prior are specified as in the simulation study, i.e., by employing the empirical Bayesian approach and setting $d_0=r_0=n$.

\subsection{Results of the empirical illustration}\label{sec:empirical_GMVaRP}

Regarding the existence of the estimates of the GMVaR portfolio, it is more likely that the Bayesian GMVaR condition \eqref{eq:bayesian_GMQ_condition} is satisfied than the conventional condition \eqref{eq:conventional_GMQ_condition}. The Bayesian GMVaR portfolios exist all the time whereas the conventional GMVaR portfolio does not always exist when $\{n = 100, k = 70, \alpha = 0.95\}$, $\{ n = 200, k = 140, \alpha = 0.95\}$ and $\{n = 100, k = 70, \alpha = 0.99\}$. For those values, the conventional GMVaR condition fails during the time period for 369, 5 and 1 portfolios, respectively, out of the 500 portfolios. Hence, we observe that the conditions are more likely to be satisfied when $\alpha$ is large or when $c$ is small.

As in the simulation study, we consider the relative VaR exceedance frequency when evaluating the performance of each estimate of the GMVaR portfolio. This value should ideally be close to $1-\alpha$. The result is summarized in Table \ref{tab:empirical_GMVaRP_VaR_exceedance}.

\begin{singlespace}
\begin{table}[H]
\small
\begin{mdframed}[backgroundcolor=black!10,rightline=false,leftline=false]
\centering
\caption{Relative VaR exceedance frequencies for the three estimates of the GMVaR portfolio.}
\label{tab:empirical_GMVaRP_VaR_exceedance}
{\begin{tabular}{| c | c | c | c | c |c |}
  \hline
  \multicolumn{3}{|c|}{Parameter setup}
  & \multicolumn{3}{c|}{GMVaR portfolio} \\
  \hline
  $\alpha$ & $n$ & $k$
   & Jeffreys & Conjugate & Conventional \\
  \hhline{|=|=|=|=|=|=|}
    \multirow{8}{*}{$0.95$} & 
    \multirow{4}{*}{$100$} &
    $10$ & 0.0771 & 0.0828 & 0.0872 \\ \cline{3-6}
    & & 
    $30$ & 0.1140 & 0.1356 & 0.1602 \\ \cline{3-6}
    & & 
    $50$ & 0.1617 & 0.2071 & 0.2698 \\ \cline{3-6}
    & &
    $70$ & 0.2227 & 0.2900 & 0.4040 \\
  \cline{2-6}
    &\multirow{4}{*}{$200$} & 
    $20$ & 0.0794 & 0.0846 & 0.0890 \\ \cline{3-6}
    & &
    $60$ & 0.1144 & 0.1354 & 0.1600 \\ \cline{3-6}
    & &
    $100$ & 0.1625 & 0.2047 & 0.2655 \\ \cline{3-6}
    & &
    $140$ & 0.2294 & 0.2920 & 0.3966 \\
  \hline
   \multirow{8}{*}{$0.99$} & 
    \multirow{4}{*}{$100$} &
    $10$ & 0.0361 & 0.0394 & 0.0424 \\ \cline{3-6}
    & & 
    $30$ & 0.0532 & 0.0664 & 0.0833 \\ \cline{3-6}
    & & 
    $50$ & 0.0802 & 0.1137 & 0.1669 \\ \cline{3-6}
    & &
    $70$ & 0.1283 & 0.1911 & 0.3057 \\
  \cline{2-6}
    &\multirow{4}{*}{$200$} & 
    $20$ & 0.0375 & 0.0404 & 0.0429 \\ \cline{3-6}
    & &
    $60$ & 0.0521 & 0.0646 & 0.0818 \\ \cline{3-6}
    & &
    $100$ & 0.0816 & 0.1135 & 0.1650 \\ \cline{3-6}
    & &
    $140$ & 0.1359 & 0.1948 & 0.3046 \\
    \hline
\end{tabular}}{}
\end{mdframed}
\end{table}
\end{singlespace}
As can be seen, the Bayesian approaches are clearly outperforming the conventional method by having exceedance frequencies closer to $1-\alpha$. Using the Jeffreys prior gives the best results in all situations that we consider. However, as in the simulation study, all of the methods are underestimating VaR since the exceedance frequency is always higher than $1-\alpha$. This is especially pronounced when $c$ is large, i.e., in the case of a large-dimensional portfolio. Even if none of the methods performs very well for such situations, the Bayesian approaches, especially the one based on the Jeffreys prior, provides a considerable improvement in comparison to the conventional method by reducing the relative exceedance frequency by 45\% for $\alpha=0.95$ and by 55\% for $\alpha=0.99$ when $k/n=0.7$. The application of the Bayesian approach based on the conjugate prior also results in much lower relative exceedance frequencies compared to the conventional method, although this Bayesian GMVaR portfolio performs always worse than the one based on the Jeffreys prior.

Similar results to those observed in Table~\ref{tab:empirical_GMVaRP_VaR_exceedance} are also present in Figure \ref{fig:empirical_mean_VaR_ef_c} where we plot the estimated mean-VaR efficient frontier for the end date using $\alpha=0.95$, $n=100$, and $c=k/n \in \{0.1,0.3,0.5,0.7\}$. 
Both Bayesian efficient frontiers are always located under the conventional efficient frontier. While the three efficient frontiers almost coincide when $c=0.1$, the difference between the conventional and Bayesian approaches becomes pronounced when $c$ becomes larger, particularly when $c=0.7$. Moreover, we again see that the GMV portfolios are not mean-VaR efficient. All of this is in line with the observations made for Figure \ref{fig:simulation_mean_VaR_ef_c} in the simulation study. Varying $\alpha$ and $n$ using the empirical data will also result in the same relationships between the efficient frontiers as shown in Figures \ref{fig:simulation_mean_VaR_ef_alpha} and \ref{fig:simulation_mean_VaR_ef_n} in the simulation study, indicating the considerable overoptimism present in the construction of the conventional efficient frontier.

\begin{figure}[H]
\begin{mdframed}[backgroundcolor=black!10,rightline=false,leftline=false]
  \begin{subfigure}[b]{0.5\linewidth}
    \centering
    \includegraphics[width=0.95\linewidth]{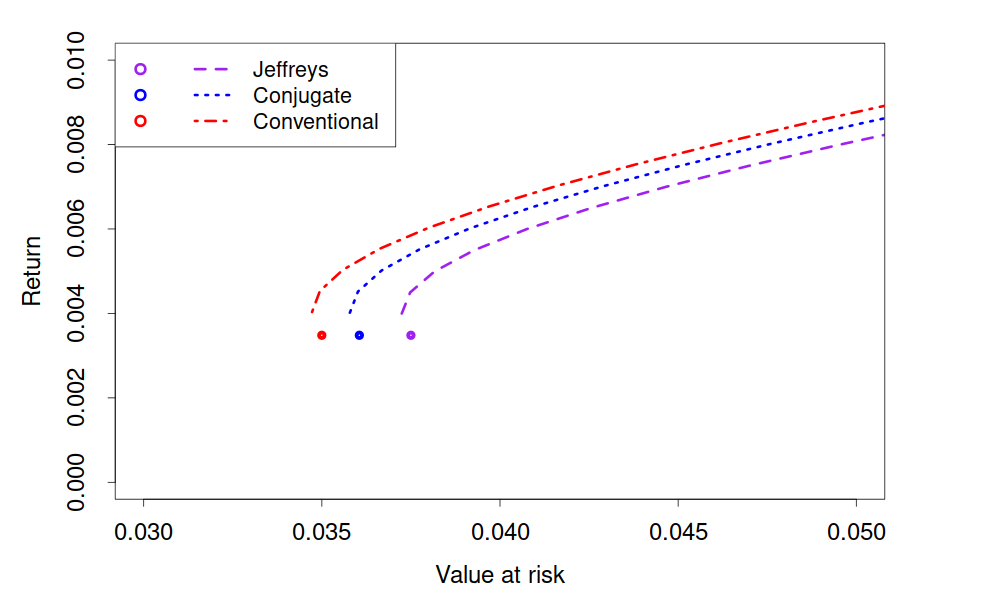}
    \caption{$c = 0.1$} 
  \end{subfigure}
  \begin{subfigure}[b]{0.5\linewidth}
    \centering
    \includegraphics[width=0.95\linewidth]{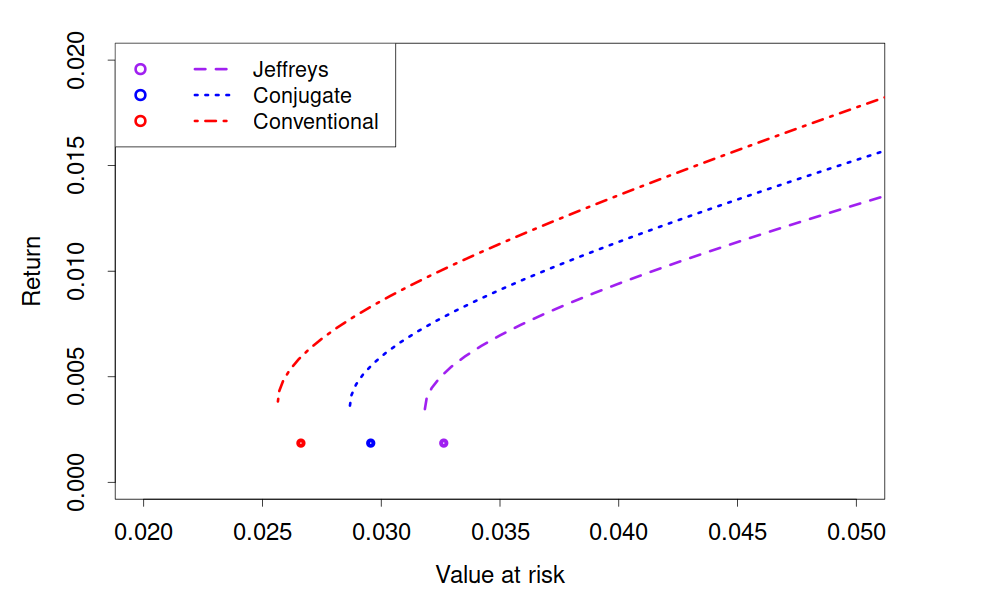}
    \caption{$c = 0.3$}
  \end{subfigure} 
  \begin{subfigure}[b]{0.5\linewidth}
    \centering
    \includegraphics[width=0.95\linewidth]{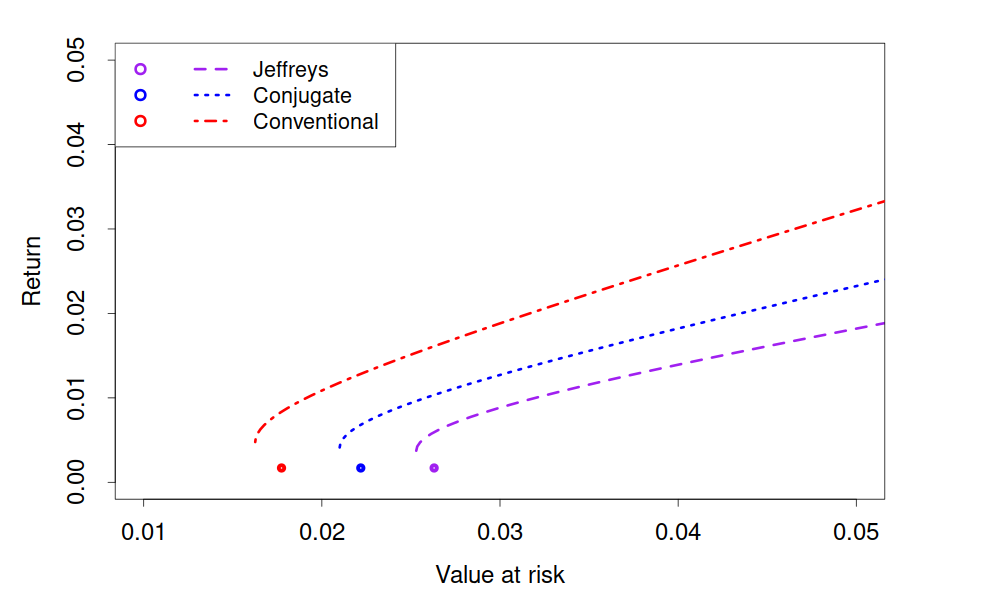}
    \caption{$c = 0.5$}
  \end{subfigure}
  \begin{subfigure}[b]{0.5\linewidth}
    \centering
    \includegraphics[width=0.95\linewidth]{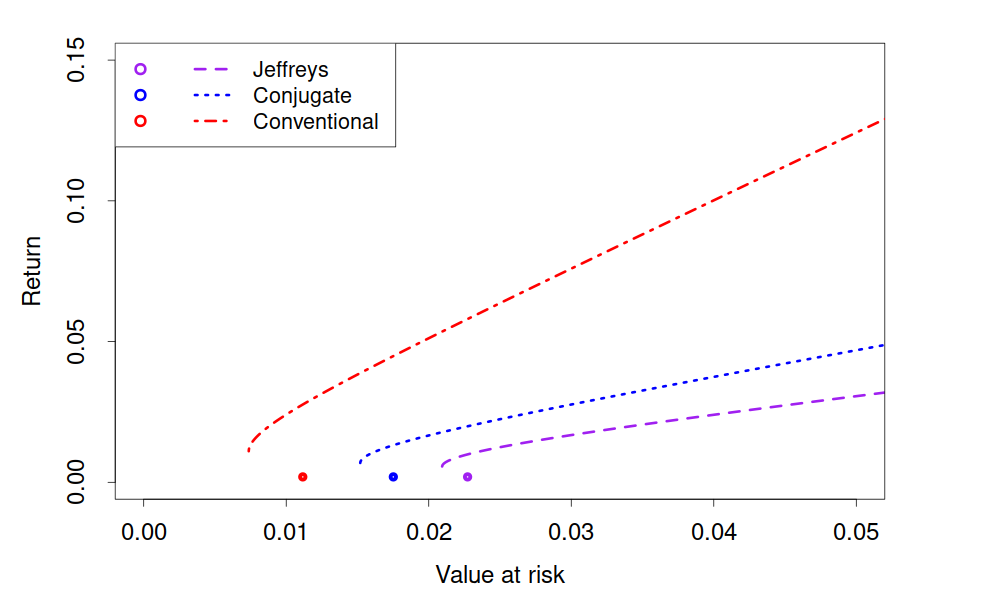}
    \caption{$c = 0.7$}
  \end{subfigure}
    \caption{Bayesian and conventional mean-VaR efficient frontiers based on empirical data for $n=100$, $\alpha = 0.95$ and $c\in\{0.1, 0.3, 0.5, 0.7\}$. The locations of the GMV portfolios are marked by circles. Different scales are used on the axes for presentation purposes.}
  \label{fig:empirical_mean_VaR_ef_c} 
\end{mdframed}
\end{figure}

\section{Conclusion}\label{sec:summary}
The traditional mean-variance analysis has been a paramount foundation for the extension to portfolio analysis based on one-sided risk measures which are popular in financial mathematics. However, the conventional approaches related to the construction of optimal portfolios usually ignore the parameter uncertainty in the construction of an optimal portfolio. It is common to define optimal portfolios by a two-step procedure where first an optimization problem is solved and then the optimal portfolios are estimated by replacing the unknown quantities in the solutions by the corresponding sample counterparts. 

The Bayesian methodology pose a fundamental difference to the conventional approaches in its viewpoint on what we want to optimize: Investors care about their \textit{future risk in taking a position}, not the risk of having the a certain position today. In light of data, today's outcome is already determined and usually not interesting. The Bayesian framework use the predictive posterior distribution to cope with this. That is, the Bayesian methodology answers the problem in a straightforward manner while the conventional method simply ignores it.

We contribute to the existent literature by formulating and solving quantile-based portfolio allocation problems from the perspective of Bayesian statistics. This approach is advantageous since it allows to take the parameter uncertainty into account before the optimization problem is solved. The development of the general risk functionals from the Bayesian perspectives appears to be a very promising subject of research with great potential of future development. The risk functionals can be defined through all information available up to the point in time when a portfolio is constructed or a decision on the risk of the current position should be made. 
As a result, no unknown or unobservable quantities are present in their definitions. This is a very appealing property since it takes all uncertainties into account before the risk functional is determined.

In the frequentist setting, the general risk measure of an optimal portfolio choice problem will have a similar structure as under the Bayesian setup when asset returns are elliptically contoured distributed \citep[see, e.g., ][for a definition and properties thereof]{gupta2013elliptically}. However, both the portfolio expected return and the portfolio variance are determined by unknown parameters of the distribution of asset returns which must be estimated in any practical application. This procedure would lead to an important task, namely to include the parameter uncertainty in the definition of the general risk measure. This challenging task has not properly been treated in the literature up to now when frequentist methods are employed, while the Bayesian approach provides an intelligent automatic solution. 

Results of the simulation study and of the empirical application leads to the conclusion that the Bayesian approaches to portfolio construction provide a good alternative to the conventional procedures and they are usually preferable in most of the considered cases. The Bayesian approaches outperform the conventional one in terms of providing a better VaR prediction. This holds uniformly, independently of the portfolio dimension, sample size, and the confidence level used in the computation of the VaR. Only when the portfolio dimension is relatively small to the sample size does the conventional method perform similarly to the Bayesian approaches. Such a behavior is expected since the priors used in the derivation of Bayesian inference can be interpreted as a regularisation and it might not be necessary employ that in such cases. Although using Jeffreys prior gave the best results in our study, a more careful calibration of the hyperparameters of the conjugate prior could have made that one more beneficial. In practice, the hyperparamters would be specified using knowledge from experts within fundamental market analysis. 

We also find that the conventional mean-VaR efficient frontier considerably overestimates the location of the true mean-VaR frontier. Although the Bayesian approaches reduce the underestimation of the VaR considerably and shrink the estimates of the efficient frontier, they still show significant overoptimism when the portfolio dimension is large in comparison to the sample size, i.e., when a large-dimensional optimal portfolio is constructed. Further research in this direction is needed which might lead to interesting results completing the existing findings in the direction of large-dimensional portfolio construction \citep[see, e.g.,][]{fan2012vast, hautsch2015high, BodnarDmytriv2019, cai2020high}.

\bibliography{bayesian_quantile}

\begin{thebibliography}{}

\bibitem[Adcock, 2014]{adcock2014mean}
Adcock, C.~J. (2014).
\newblock Mean--variance--skewness efficient surfaces, stein’s lemma and the
  multivariate extended skew-student distribution.
\newblock {\em European Journal of Operational Research}, 234(2):392--401.

\bibitem[Alexander and Baptista, 2002]{alexander2002economic}
Alexander, G.~J. and Baptista, A.~M. (2002).
\newblock Economic implications of using a mean-var model for portfolio
  selection: A comparison with mean-variance analysis.
\newblock {\em Journal of Economic Dynamics and Control}, 26(7):1159--1193.

\bibitem[Alexander and Baptista, 2004]{alexander2004comparison}
Alexander, G.~J. and Baptista, A.~M. (2004).
\newblock A comparison of var and cvar constraints on portfolio selection with
  the mean-variance model.
\newblock {\em Management Science}, 50(9):1261--1273.

\bibitem[Artzner et~al., 1999]{artzner1999coherent}
Artzner, P., Delbaen, F., Eber, J.-M., and Heath, D. (1999).
\newblock Coherent measures of risk.
\newblock {\em Mathematical Finance}, 9(3):203--228.

\bibitem[Avramov and Zhou, 2010]{avramov2010bayesian}
Avramov, D. and Zhou, G. (2010).
\newblock Bayesian portfolio analysis.
\newblock {\em The Annual Review of Financial Economics}, 2(1):25--47.

\bibitem[Babat et~al., 2018]{babat2018computing}
Babat, O., Vera, J.~C., and Zuluaga, L.~F. (2018).
\newblock Computing near-optimal value-at-risk portfolios using integer
  programming techniques.
\newblock {\em European Journal of Operational Research}, 266(1):304--315.

\bibitem[Barry, 1974]{Barry1974}
Barry, C.~B. (1974).
\newblock {Portfolio analysis under uncertain means, variances, and
  covariances}.
\newblock {\em Journal of Finance}, 29:515--522.

\bibitem[Bauder et~al., 2019]{bauder2019bayesian}
Bauder, D., Bodnar, R., Bodnar, T., and Schmid, W. (2019).
\newblock Bayesian estimation of the efficient frontier.
\newblock {\em Scandinavian Journal of Statistics}, 46:802--830.

\bibitem[Bauder et~al., 2020a]{bauder2020bayesian}
Bauder, D., Bodnar, T., Parolya, N., and Schmid, W. (2020a).
\newblock Bayesian inference of the multi-period optimal portfolio for an
  exponential utility.
\newblock {\em Journal of Multivariate Analysis}, 175.
\newblock 104544.

\bibitem[Bauder et~al., 2020b]{bauder2018bayesian}
Bauder, D., Bodnar, T., Parolya, N., and Schmid, W. (2020b).
\newblock Bayesian mean--variance analysis: optimal portfolio selection under
  parameter uncertainty.
\newblock {\em Quantitative Finance}, page to appear.

\bibitem[Baumol, 1963]{baumol1963expected}
Baumol, W.~J. (1963).
\newblock An expected gain-confidence limit criterion for portfolio selection.
\newblock {\em Management Science}, 10(1):174--182.

\bibitem[Bawa et~al., 1979]{bawa1979estimation}
Bawa, V.~S., Brown, S.~J., and Klein, R.~W. (1979).
\newblock {\em Estimation risk and optimal portfolio choice}.
\newblock North-Holland.

\bibitem[Bernardo and Smith, 2009]{bernardo2009bayesian}
Bernardo, J.~M. and Smith, A.~F. (2009).
\newblock {\em Bayesian theory}, volume 405.
\newblock John Wiley \& Sons.

\bibitem[Black and Litterman, 1992]{black1992global}
Black, F. and Litterman, R. (1992).
\newblock Global portfolio optimization.
\newblock {\em Financial Analysts Journal}, 48:28--43.

\bibitem[Bodnar and Bodnar, 2010]{bodnar2010unbiased}
Bodnar, O. and Bodnar, T. (2010).
\newblock On the unbiased estimator of the efficient frontier.
\newblock {\em International Journal of Theoretical and Applied Finance},
  13:1065--1073.

\bibitem[Bodnar et~al., 2019]{BodnarDmytriv2019}
Bodnar, T., Dmytriv, S., Parolya, N., and Schmid, W. (2019).
\newblock Tests for the weights of the global minimum variance portfolio in a
  high-dimensional setting.
\newblock {\em IEEE Transactions on Signal Processing}, 67(17):4479--4493.

\bibitem[Bodnar et~al., 2017]{BodnarMazurOkhrin2017}
Bodnar, T., Mazur, S., and Okhrin, Y. (2017).
\newblock Bayesian estimation of the global minimum variance portfolio.
\newblock {\em European Journal of Operational Research}, 256:292--307.

\bibitem[Bodnar et~al., 2018]{bodnar2018estimation}
Bodnar, T., Parolya, N., and Schmid, W. (2018).
\newblock Estimation of the global minimum variance portfolio in high
  dimensions.
\newblock {\em European Journal of Operational Research}, 266(1):371--390.

\bibitem[Bodnar and Schmid,
  2009]{Bodnar2009Econometricalanalysisofthesampleefficientfrontier}
Bodnar, T. and Schmid, W. (2009).
\newblock Econometrical analysis of the sample efficient frontier.
\newblock {\em The European Journal of Finance}, 15(3):317--335.

\bibitem[Bodnar et~al., 2012]{bodnar2012minimum}
Bodnar, T., Schmid, W., and Zabolotskyy, T. (2012).
\newblock Minimum var and minimum cvar optimal portfolios: estimators,
  confidence regions, and tests.
\newblock {\em Statistics \& Risk Modeling with Applications in Finance and
  Insurance}, 29(4):281--313.

\bibitem[Broadie, 1993]{broadie1993computing}
Broadie, M. (1993).
\newblock Computing efficient frontiers using estimated parameters.
\newblock {\em Annals of Operations Research}, 45:21--58.

\bibitem[Bronshtein et~al., 2015]{bronshtein2013handbook}
Bronshtein, I.~N., Semendyayev, K.~A., Musiol, G., and M\"{u}hlig, H. (2015).
\newblock {\em Handbook of Mathematics}.
\newblock Springer Science \& Business Media.

\bibitem[Brown, 1976]{Brown1976}
Brown, S.~J. (1976).
\newblock {\em Optimal portfolio choice under uncertainty: a Bayesian
  approach}.
\newblock PhD thesis, University of Chicago.

\bibitem[Brugi{\`e}re, 2020]{brugiere2020quantitative}
Brugi{\`e}re, P. (2020).
\newblock {\em Quantitative Portfolio Management}.
\newblock Springer.

\bibitem[Cai et~al., 2020]{cai2020high}
Cai, T.~T., Hu, J., Li, Y., and Zheng, X. (2020).
\newblock High-dimensional minimum variance portfolio estimation based on
  high-frequency data.
\newblock {\em Journal of Econometrics}, 214(2):482--494.

\bibitem[Chopra and Ziemba, 1993]{chopra1993effect}
Chopra, V.~K. and Ziemba, W.~T. (1993).
\newblock The effect of errors in means, variances, and covariances on optimal
  portfolio choice.
\newblock {\em Journal of Portfolio Management}, 19(2):6--11.

\bibitem[Cont, 2001]{cont2001empirical}
Cont, R. (2001).
\newblock Empirical properties of asset returns: stylized facts and statistical
  issues.
\newblock {\em Quantitative Finance}, 1:223--236.

\bibitem[Fan et~al., 2012]{fan2012vast}
Fan, J., Zhang, J., and Yu, K. (2012).
\newblock Vast portfolio selection with gross-exposure constraints.
\newblock {\em Journal of the American Statistical Association},
  107(498):592--606.

\bibitem[Frankfurter et~al., 1971]{frankfurter1971portfolio}
Frankfurter, G.~M., Phillips, H.~E., and Seagle, J.~P. (1971).
\newblock Portfolio selection: the effects of uncertain means, variances, and
  covariances.
\newblock {\em Journal of Financial and Quantitative Analysis},
  6(5):1251--1262.

\bibitem[Frost and Savarino, 1986]{frost1986empirical}
Frost, P.~A. and Savarino, J.~E. (1986).
\newblock An empirical bayes approach to efficient portfolio selection.
\newblock {\em Journal of Financial and Quantitative Analysis}, 21(3):293--305.

\bibitem[Givens and Hoeting, 2013]{givens2013computational}
Givens, G.~H. and Hoeting, J.~A. (2013).
\newblock {\em Computational statistics}.
\newblock John Wiley \& Sons.

\bibitem[Gupta and Nagar, 2000]{GuptaNagar2000}
Gupta, A.~K. and Nagar, D.~K. (2000).
\newblock {\em Matrix variate distributions}.
\newblock Chapman and Hall/CRC, Boca Raton.

\bibitem[Gupta et~al., 2013]{gupta2013elliptically}
Gupta, A.~K., Varga, T., and Bodnar, T. (2013).
\newblock {\em Elliptically contoured models in statistics and portfolio
  theory}.
\newblock Springer.

\bibitem[Hautsch et~al., 2015]{hautsch2015high}
Hautsch, N., Kyj, L.~M., and Malec, P. (2015).
\newblock Do high-frequency data improve high-dimensional portfolio
  allocations?
\newblock {\em Journal of Applied Econometrics}, 30:263--290.

\bibitem[Jorion, 1986]{jorion1986bayes}
Jorion, P. (1986).
\newblock Bayes-stein estimation for portfolio analysis.
\newblock {\em Journal of Financial and Quantitative analysis}, 21(3):279--292.

\bibitem[Jorion, 1997]{jorion1997value}
Jorion, P. (1997).
\newblock {\em Value at risk: the new benchmark for controlling market risk}.
\newblock Irwin Professional Pub.

\bibitem[Kan and Smith, 2008]{kan2008distribution}
Kan, R. and Smith, D.~R. (2008).
\newblock The distribution of the sample minimum-variance frontier.
\newblock {\em Management Science}, 54(7):1364--1380.

\bibitem[Kingman et~al., 1978]{kingman1978uses}
Kingman, J.~F. et~al. (1978).
\newblock Uses of exchangeability.
\newblock {\em The Annals of Probability}, 6(2):183--197.

\bibitem[Klein and Bawa, 1976]{klein1976effect}
Klein, R.~W. and Bawa, V.~S. (1976).
\newblock The effect of estimation risk on optimal portfolio choice.
\newblock {\em Journal of Financial Economics}, 3(3):215--231.

\bibitem[Kolm and Ritter, 2017]{kolm2017bayesian}
Kolm, P. and Ritter, G. (2017).
\newblock On the bayesian interpretation of black--litterman.
\newblock {\em European Journal of Operational Research}, 258(2):564--572.

\bibitem[Kotz and Nadarajah, 2004]{kotz2004multivariate}
Kotz, S. and Nadarajah, S. (2004).
\newblock {\em Multivariate t-distributions and their applications}.
\newblock Cambridge University Press.

\bibitem[Kr{\"u}ger et~al., 2020]{kruger2020predictive}
Kr{\"u}ger, F., Lerch, S., Thorarinsdottir, T.~L., and Gneiting, T. (2020).
\newblock Predictive inference based on markov chain monte carlo output.
\newblock {\em arXiv preprint arXiv:1608.06802}.

\bibitem[Markowitz, 1952]{markowitz1952}
Markowitz, H. (1952).
\newblock Portfolio selection.
\newblock {\em The Journal of Finance}, 7(1):77--91.

\bibitem[Meng and Taylor, 2020]{meng2020estimating}
Meng, X. and Taylor, J.~W. (2020).
\newblock Estimating value-at-risk and expected shortfall using the intraday
  low and range data.
\newblock {\em European Journal of Operational Research}, 280(1):191--202.

\bibitem[Merton, 1972]{merton1972}
Merton, R.~C. (1972).
\newblock An analytic derivation of the efficient portfolio frontier.
\newblock {\em The Journal of Financial and Quantitive Analysis},
  7(4):1851--1872.

\bibitem[Merton, 1980]{merton1980estimating}
Merton, R.~C. (1980).
\newblock On estimating the expected return on the market: An exploratory
  investigation.
\newblock {\em Journal of Financial Economics}, 8(4):323--361.

\bibitem[Pritsker, 1997]{pritsker1997evaluating}
Pritsker, M. (1997).
\newblock Evaluating value at risk methodologies: accuracy versus computational
  time.
\newblock {\em Journal of Financial Services Research}, 12(2-3):201--242.

\bibitem[Rachev et~al., 2008]{rachev2008bayesian}
Rachev, S.~T., Hsu, J.~S., Bagasheva, B.~S., and Fabozzi, F.~J. (2008).
\newblock {\em Bayesian methods in finance}.
\newblock John Wiley \& Sons.

\bibitem[Rockafellar and Uryasev, 2000]{rockafellar2000optimization}
Rockafellar, R.~T. and Uryasev, S. (2000).
\newblock Optimization of conditional value-at-risk.
\newblock {\em Journal of Risk}, 2:21--42.

\bibitem[Siegel and Woodgate, 2007]{siegel2007performance}
Siegel, A.~F. and Woodgate, A. (2007).
\newblock Performance of portfolios optimized with estimation error.
\newblock {\em Management Science}, 53:1005--1015.

\bibitem[Simaan, 2014]{simaan2014opportunity}
Simaan, Y. (2014).
\newblock The opportunity cost of mean--variance choice under estimation risk.
\newblock {\em European Journal of Operational Research}, 234(2):382--391.

\bibitem[Staino and Russo, 2020]{staino2020nested}
Staino, A. and Russo, E. (2020).
\newblock Nested conditional value-at-risk portfolio selection: A model with
  temporal dependence driven by market-index volatility.
\newblock {\em European Journal of Operational Research}, 280(2):741--753.

\bibitem[Stambaugh, 1997]{stambaugh1997analyzing}
Stambaugh, R.~F. (1997).
\newblock Analyzing investments whose histories differ in length.
\newblock {\em Journal of Financial Economics}, 45:285--331.

\bibitem[Tsay, 2010]{tsay2005analysis}
Tsay, R.~S. (2010).
\newblock {\em Analysis of financial time series}.
\newblock John Wiley \& Sons.

\bibitem[Tu and Zhou, 2010]{tu2010incorporating}
Tu, J. and Zhou, G. (2010).
\newblock Incorporating economic objectives into bayesian priors: Portfolio
  choice under parameter uncertainty.
\newblock {\em Journal of Financial and Quantitative Analysis}, pages 959--986.

\bibitem[{Wikipedia contributors}, 2020]{wiki:SP500}
{Wikipedia contributors} (2020).
\newblock List of s\&p 500 companies -- wikipedia, the free encyclopedia.
\newblock [Online; accessed 14-July-2020].

\bibitem[Winkler and Barry, 1975]{winkler1975bayesian}
Winkler, R.~L. and Barry, C.~B. (1975).
\newblock A bayesian model for portfolio selection and revision.
\newblock {\em The Journal of Finance}, 30(1):179--192.

\bibitem[Zellner and Ando, 2010]{zellner2010direct}
Zellner, A. and Ando, T. (2010).
\newblock A direct monte carlo approach for bayesian analysis of the seemingly
  unrelated regression model.
\newblock {\em Journal of Econometrics}, 159(1):33--45.

\end{thebibliography}

\section*{Acknowledgement} This research was partly supported by the Swedish Research Council (VR) via the project ``Bayesian Analysis of Optimal Portfolios and Their Risk Measures''.

\appendix
\section{Proofs of theoretical results}\label{appendix:proofs}

In order to get the stochastic representations presented below we also need the following result.
\begin{lemma}\label{prop:sum_of_t_dist}
Let a random variable $z$ possess the following stochastic representation
\begin{equation}
    z \eqdist \frac{\tau_1}{\sqrt{vd}}+\sqrt{1+\frac{\tau_1^2}{d}}\frac{\tau_2}{\sqrt{d+1}},
    \label{eq:z_expression}
\end{equation}
where $d>0$, $\tau_1$ and $\tau_2$ are independent with $\tau_1 \sim t(d)$ and $\tau_2 \sim t(d+1)$. Then, $z$ follows a $t$-distribution with $d$ degrees of freedom, location parameter 0, and scale parameter $\sqrt{(v+1)/vd}$.
\end{lemma}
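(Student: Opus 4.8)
I would prove the statement by computing the density of $z$ explicitly, working from the conditional law of $z$ given $\tau_1$. Rewriting \eqref{eq:z_expression} as $z = a(\tau_1) + b(\tau_1)\tau_2$ with $a(\tau_1)=\tau_1/\sqrt{vd}$ and $b(\tau_1)^2=(1+\tau_1^2/d)/(d+1)$, and using that $\tau_2\sim t(d+1)$ is independent of $\tau_1$, the conditional distribution of $z$ given $\tau_1$ is $t(d+1,a(\tau_1),b(\tau_1)^2)$. Denoting by $c_q=\Gamma((q+1)/2)/(\Gamma(q/2)\sqrt{\pi q})$ the standard $t(q)$ normalising constant and by $f_{\tau_1}(u)=c_d(1+u^2/d)^{-(d+1)/2}$ the density of $\tau_1$, the density of $z$ becomes
\[
f_z(x)=\int_{-\infty}^{\infty}\frac{c_{d+1}}{b(\tau_1)}\left(1+\frac{(x-a(\tau_1))^2}{(d+1)b(\tau_1)^2}\right)^{-(d+2)/2}f_{\tau_1}(\tau_1)\,d\tau_1 .
\]

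The key point, which is what makes the computation collapse, is that $(d+1)b(\tau_1)^2=1+\tau_1^2/d$: the bracketed factor then equals $\big(1+\tfrac{\tau_1^2}{d}+(x-\tau_1/\sqrt{vd})^2\big)^{-(d+2)/2}\,(1+\tau_1^2/d)^{(d+2)/2}$, and multiplying by $1/b(\tau_1)=\sqrt{d+1}\,(1+\tau_1^2/d)^{-1/2}$ and by the $(1+\tau_1^2/d)^{-(d+1)/2}$ carried by $f_{\tau_1}$, all powers of $1+\tau_1^2/d$ cancel, leaving
\[
f_z(x)=c_d\,c_{d+1}\sqrt{d+1}\int_{-\infty}^{\infty}\left(1+\frac{\tau_1^2}{d}+\left(x-\frac{\tau_1}{\sqrt{vd}}\right)^2\right)^{-(d+2)/2}d\tau_1 .
\]
I would then expand the bracket as a quadratic in $\tau_1$ and complete the square, turning it into $\big(1+\tfrac{v}{v+1}x^2+\tfrac{v+1}{vd}(\tau_1-\tau_1^{\ast})^2\big)^{-(d+2)/2}$ for an explicit $\tau_1^{\ast}(x)$; the substitution $\tau_1-\tau_1^{\ast}=\sqrt{\tfrac{vd}{v+1}}\,\sqrt{1+\tfrac{v}{v+1}x^2}\,s$ pulls out $\big(1+\tfrac{v}{v+1}x^2\big)^{-(d+1)/2}$ and reduces the remaining integral to $\int_{-\infty}^{\infty}(1+s^2)^{-(d+2)/2}\,ds$, a finite constant when $d>0$. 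Hence $f_z(x)\propto\big(1+\tfrac{v}{v+1}x^2\big)^{-(d+1)/2}$.

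Finally I would identify the right-hand side: the density of a $t$-distribution with $d$ degrees of freedom, location $0$ and scale $b_0=\sqrt{(v+1)/(vd)}$ is proportional to $\big(1+x^2/(db_0^2)\big)^{-(d+1)/2}=\big(1+\tfrac{v}{v+1}x^2\big)^{-(d+1)/2}$, which is precisely what was obtained; since both $f_z$ and this $t$ density are probability densities, proportionality forces equality, and the claim follows. There is no deep obstacle here beyond careful bookkeeping: one must spot the cancellation via $(d+1)b(\tau_1)^2=1+\tau_1^2/d$, carry the Gamma-function constants through it, and verify the completion of the square — in particular that the discriminant term contributes exactly $-x^2/(v+1)$, so that $1+x^2$ collapses to $1+\tfrac{v}{v+1}x^2$. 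Convergence of both the $\tau_1$-integral and the final $s$-integral is automatic for $d>0$.
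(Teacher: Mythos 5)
Your argument is correct, and its first half coincides with the paper's: both proofs start from the observation that, conditionally on $\tau_1$, $z$ is $t(d+1)$ with location $\tau_1/\sqrt{vd}$ and squared scale $(1+\tau_1^2/d)/(d+1)$, and both exploit the same cancellation of all powers of $1+\tau_1^2/d$ in the product $f(z\mid\tau_1)f(\tau_1)$, leaving a density depending only on the quadratic form $1+\tau_1^2/d+(z-\tau_1/\sqrt{vd})^2$. The proofs diverge only in the last step: the paper rewrites that quadratic form as $1+\frac{1}{d}[z,\tau_1]\,\mathbf{A}\,[z,\tau_1]^{\top}$, recognises the kernel of a bivariate $t_2(d,\mathbf{0},\bOmega)$ distribution, and reads off the marginal of $z$ by citing the marginalisation property of the multivariate $t$; you instead integrate $\tau_1$ out by hand, completing the square (your bookkeeping checks out: the cross term contributes $-x^2/(v+1)$, so the bracket collapses to $(1+\tfrac{v}{v+1}x^2)(1+s^2)$ after rescaling, and $\int(1+s^2)^{-(d+2)/2}ds<\infty$ for $d>0$). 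Your route is fully self-contained and in effect reproves the marginalisation property in this special case, at the cost of a slightly longer computation; the paper's route is shorter but leans on an external fact about multivariate $t$ marginals. Either way the identification of the limiting kernel $(1+\tfrac{v}{v+1}x^2)^{-(d+1)/2}$ with the $t(d,0,(v+1)/(vd))$ density is correct, so the proposal is a valid proof.
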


\begin{proof}[Proof of Lemma \ref{prop:sum_of_t_dist}.] 
Since $\tau_1$ and $\tau_2$ are independent with $\tau_2 \sim t(d+1)$, the conditional distribution of $z$ given $\tau_1$ is a $t$-distribution with $d+1$ degrees of freedom, location parameter ${\tau_1}/{\sqrt{vd}}$ and scale parameter ${g(\tau_1)}/{\sqrt{d+1}}$ with $g(\tau_1)=\sqrt{1+{\tau_1^2}/{d}}$. Thus the joint distribution of $z$ and $\tau_1$ is given by
\begin{align*}
    f(z,\tau_1) & = f(z|\tau_1)f(\tau_1) \\
    & = \frac{\Gamma\left(\frac{d+2}{2}\right)}{\Gamma\left(\frac{d+1}{2}\right)}
    \frac{1}{\sqrt{\pi}}\frac{1}{g(\tau_1)}\left(1+
    \left(\frac{z-\frac{\tau_1}{\sqrt{vd}}}{g(\tau_1)}\right)^2\right)^{-\frac{d+2}{2}}
    \frac{\Gamma\left(\frac{d+1}{2}\right)}{\Gamma\left(\frac{d}{2}\right)}
    \frac{1}{\sqrt{\pi d}}\left(1+\frac{\tau_1^2}{d}\right)^{-\frac{d+1}{2}} \\
    & \propto \frac{1}{g(\tau_1)}
    \left(1+\left(\frac{z-\frac{\tau_1}{\sqrt{vd}}}{g(\tau_1)}\right)^2\right)^{-\frac{d+2}{2}}
    g(\tau_1)^{-(d+1)}
     = \left(g(\tau_1)^2+\left( z-\frac{\tau_1}{\sqrt{vd}}\right)^2\right)^{-\frac{d+2}{2}} \\
    & = \left(1+\frac{1}{d}[z, \tau_1] \begin{bmatrix} d & -\frac{\sqrt{vd}}{v} \\ -\frac{\sqrt{vd}}{v} & \frac{v+1}{v}\end{bmatrix} \begin{bmatrix} z \\ \tau_1\end{bmatrix}\right)^{-\frac{d+2}{2}}.
\end{align*}

The last expression is the kernel of a multivariate $t$-distribution with $d$ degrees of freedom, location vector $\bol{\nu}=\boldsymbol{0}$ and dispersion matrix $\bOmega$ given by
\begin{equation*}
\bOmega = \begin{bmatrix} \frac{v+1}{vd} & \frac{\sqrt{vd}}{vd} \\ \frac{\sqrt{vd}}{vd} & 1 \end{bmatrix}.    
\end{equation*}
Hence, the marginal distribution of $z$ is also a $t$-distribution with $d$ degrees of freedom, location $0$ and scale $\sqrt{(v+1)/vd}$ \citep[see, e.g.,][]{kotz2004multivariate}.
\end{proof}

\begin{proof}[Proof of Theorem~\ref{theorem:predictive_posterior}.]
\cite{bauder2018bayesian} characterized the posterior predictive distribution of the portfolio return by deriving the stochastic representation of $\widehat{X}_{P,t}$ given by
\begin{equation*}
 \widehat{X}_{P,t}\eqdist  
 m+\sqrt{s }
 \left(\frac{\tau_1}{\sqrt{vd}}+\sqrt{1+\frac{\tau_1^2}{d}}\frac{\tau_2}{\sqrt{d+1}}\right)
 \end{equation*}
 with $m=\mathbf{w}^T\bar{\bx}_{t-1,J}$, $s=\mathbf{w}^T\bS_{t-1,J}\mathbf{w}$, $v=n$, and $d=n-k$ under the Jeffreys prior and with $m=\mathbf{w}^T\bar{\bx}_{t-1,I}$, $s=\mathbf{w}^T\bS_{t-1,I}\mathbf{w}$, $v=n+r_0$ and $d=n+d_0-2k$ under the conjugate prior. The application of Lemma~\ref{prop:sum_of_t_dist} leads to the statement of the theorem.
\end{proof}

\begin{proof}[Proof of Theorem \ref{theorem:convexity_of_VaR_and_CVaR}.]
The statement of the theorem follows from the fact that $\bw^\top \bar{\bx}_{t-1}$ is linear in $\bw$ and that, since $\bS_{t-1}$ is positive definite, $\sqrt{\bw^T\bS_{t-1}\bw}$ can be regarded as the Euclidean norm of $\bS_{t-1}^{1/2}\bw$ where $\bS_{t-1}^{1/2}$ is the symmetric square root of $\bS_{t-1}$. Since $q_{\alpha}>0$ and  the Euclidean norm is convex, the result
follows.
\end{proof}

\begin{proof}[Proof of Theorem \ref{theorem:bayesian_GMQ}.]
Let $c_{k,n}=\frac{d_{k,n}r_{k,n}}{d_{k,n}-2}$. Since the solution of 
\begin{equation*}
\min_{\bw:\,\bw^\top\mathbf{1}=1} Q_{t-1}(\bw),
\end{equation*}
belongs to the Bayesian efficient frontier \eqref{eq:bayesian_ef_mv} in the mean-variance space, it can be found by solving the univariate optimization problem given by
\begin{equation}\label{eq:optimal_V}
\min_{V: V\ge V_{GMV,t-1}} -R_{GMV,t-1}-(c_{k,n})^{-1/2}\sqrt{s_{t-1}}\sqrt{V-V_{GMV,t-1}} + q_{\alpha} (c_{k,n})^{-1/2}\sqrt{r_{k,n}} \sqrt{V}
\end{equation}
where $R_{GMV,t-1}$ and $V_{GMV,t-1}$ are given in \eqref{eq:bayesian_gmv_weights} and $s_{t-1}$ is defined in \eqref{eq:bayesian_ef_mv}. The solution of \eqref{eq:optimal_V} solves
\begin{equation}\label{eq:derivative_of_global_min_VaR_CVaR_objective}
    q_{\alpha} \sqrt{r_{k,n}}\frac{1}{\sqrt{V}}=\sqrt{s_{t-1}}\frac{1}{\sqrt{V-V_{GMV,t-1}}}
\end{equation}
and it is given by
\begin{equation}\label{eq:var_of_global_min_VaR_CVaR}
    V_{GMQ,t-1} = \frac{q_{\alpha}^2}{q_{\alpha}^2-r_{k,n}^{-1}s_{t-1}}V_{GMV,t-1}.
\end{equation}
where it obviously holds that $V_{GMQ,t-1}>V_{GMV,t-1}$ as soon as $q_{\alpha}^2-r_{k,n}^{-1}s_{t-1}>0$, which coincides with the second order condition needed to ensure that $V_{GMQ,t-1}$ is the solution of \eqref{eq:optimal_V}.

Finally, $R_{GMQ,t-1}$ is obtained from \eqref{eq:bayesian_ef_mv} and it is given by 
\begin{eqnarray*}
    R_{GMQ,t-1} &=& R_{GMV,t-1}+\sqrt{c_{k,n}^{-1}s_{t-1}} \sqrt{\frac{q_{\alpha}^2}{q_{\alpha}^2-r_{k,n}^{-1}s_{t-1}}V_{GMV,t-1}-V_{GMV,t-1}} \\
    &=&R_{GMV,t-1}+ \frac{r_{k,n}^{-1}s_{t-1}}{\sqrt{q_{\alpha}^2-r_{k,n}^{-1}s_{t-1}}}\sqrt{\frac{d_{k,n}-2}{d_{k,n}}}\sqrt{V_{GMV,t-1}}.
\end{eqnarray*}
\end{proof}

\begin{proof}[Proof of Theorem \ref{theorem:bayesian_mean_Q_ef}.]
From \eqref{eq:bayesian_ef_mv} and \eqref{eq:VaR_CVaR_common}, we get
\begin{equation}\label{eq_app1}
\frac{(R-R_{GMV,t-1})^2}{a_{t-1}}+V_{GMV,t-1}=V
\quad\text{and}\quad    
V=\left(\frac{R+Q}{b}\right)^2
\end{equation}
where
\begin{equation*}
a_{t-1}= \frac{d_{k,n}-2}{d_{k,n}r_{k,n}}s_{t-1}   
\quad\text{and}\quad    
b=q_{\alpha}\sqrt{\frac{d_{k,n}-2}{d_{k,n}}}.
\end{equation*}
From \eqref{eq_app1}, we get
\begin{eqnarray*}
Q&=&b\sqrt{\frac{(R-R_{GMV,t-1})^2}{a_{t-1}}+V_{GMV,t-1}}-R\\
&=&q_{\alpha}\sqrt{\frac{(R-R_{GMV,t-1})^2}{r_{k,n}^{-1}s_{t-1}}
+\frac{d_{k,n}-2}{d_{k,n}}V_{GMV,t-1}}-R.
\end{eqnarray*}

Finally, we note that the Bayesian efficient frontier in the mean-Q space can be rewritten as
\begin{equation*}
R^2-2RR_{GMV,t-1}+R_{GMV,t-1}^2 - \frac{a_{t-1}}{b^2}R^2
-2\frac{a_{t-1}}{b^2}RQ-\frac{a_{t-1}}{b^2}Q^2
+a_{t-1}V_{GMV,t-1}    =0,
\end{equation*}
which is a hyperbola in the mean-Q space for $s_{t-1}>0$ \citep[see, e.g., Section 3.5.2.11 in][]{bronshtein2013handbook} since
\[-\frac{a_{t-1}}{b^2}\left(1-\frac{a_{t-1}}{b^2}\right)-\frac{a_{t-1}^2}{b^4}=-\frac{a_{t-1}}{b^2}=
-\frac{s_{t-1}}{r_{k,n}q_{\alpha}^2} <0.\]
\end{proof}

\section{List of stocks}\label{appendix:stocks}
Table \ref{table:stocks} presents the list of stocks considered in Section \ref{sec:empirical}.
\begin{singlespace}
\begin{table}[H]
\small
\begin{mdframed}[backgroundcolor=black!10,rightline=false,leftline=false]
\centering
\caption{Stocks considered in the empirical illustration.}
\label{table:stocks}
{\begin{tabular}{|c c c c c c c c c|}
    \hline
 MMM & ABT & ADBE & AES & AFL & A  &  APD & AKAM & ALL \\
 \hline
 GOOG & AMZN & AEE & AXP & AIG & AMT & AMP & ABC &  AMGN \\
 \hline
 APH & ADI & ANTM & AON &  APA & AIV & AAPL  & AMAT & ADM \\
 \hline
 AIZ & T &  ADSK & ADP &  AZO & AVB & AVY & BLL & BAC \\
 \hline
 BK  & BAX &  BDX & BBY & BIIB & BKNG & BXP & BSX & BF.B \\
 \hline
 CHRW & COG & COF & CAH & CCL & CBRE & CNP & CTL & CF \\
 \hline
 SCHW & CI & CINF & CTAS & CSCO & C & CTXS & CLX & CME \\
 \hline
 CMS & CTSH & CMA & CAG & STZ & COST & CSX & CMI & DVA \\
 \hline
 XRAY & DVN & DFS & DOV & DUK & ETFC & EMN & ECL & EA \\
 \hline
 EMR  & EOG & EFX & EQR & EL  & EXPE & EXPD & FAST & FDX \\
 \hline
 FIS & FISV & FLIR & FLS &  FMC & GPS & GIS & GPC & GILD \\
 \hline
 GL & GS &  GWW & HRB & HAS & PEAK & HES & HD &  HON \\
 \hline
 HRL & HST & HPQ & ITW & INTC & ICE & IPG & IFF &  INTU \\
 \hline
 ISRG &  IVZ & IRM & J & SJM & JNJ & JPM & JNPR & KEY \\
 \hline
 KIM &  LB & LH &  LEN & LLY &  LNC & LIN & LMT & LOW \\
 \hline
 MRO & MMC & MAS & MA &  MCD & MDT & MCHP & MU  & MSFT \\
 \hline
 TAP & MYL & NDAQ & NOV &  NTAP & NWL & NEM & NEE & NKE \\
 \hline
 NBL &  JWN  & NOC & NUE & NVDA &  ORLY & OXY & ORCL & PCAR \\
 \hline
 PH  & PBCT & PKI & PM &  PXD & PNC & PFG & PGR & PLD \\
 \hline
 PRU & PSA & PHM &  PWR & DGX & RL &  RF &  RSG & RHI \\
 \hline
 ROP & ROST & CRM & SLB & SHW & SPG & SNA & LUV &  SWK \\
 \hline
 SYK & SYY & TGT & FTI & TXT & TIF &  TJX & TRV & TFC \\
 \hline
 UNH & UPS & UNM & VFC & VAR & VTR & VRSN & VZ &  V \\
 \hline
 VMC & WMT & WBA & DIS & WEC & WFC  & WELL & WDC & WMB \\
 \hline
 WYNN &  XLNX & YUM &  ZBH & ZION & & & & \\
 \hline
    \end{tabular}}{}
\end{mdframed}
\end{table}
\end{singlespace}
\end{document}